\documentclass[graybox]{svmult}
\usepackage{mathptmx}       
\usepackage{helvet}         
\usepackage{courier}        
\usepackage{type1cm}        
\usepackage{makeidx}         
\usepackage{graphicx}        
\usepackage{multicol}        
\usepackage[bottom]{footmisc}        

\usepackage{amsfonts,amssymb,amsmath,amscd}


\def\R{\mathbb{R}}

\def\Hom{\text{\rm Hom}}

\def\IDE{\text{\rm id}} 
\def\di{\partial} 



\newcommand{\SVEC}[1]{\mathop{\raisebox{0.05pt}{$#1$}}\limits^{\raisebox{-4pt}{\tiny $\to$}}{}}

\def\Vx{\SVEC{\mathrm{x}}} 
\def\Vy{\SVEC{\mathrm{y}}}
\def\Vz{\SVEC{\mathrm{z}}}
\def\Vf{\SVEC{\mathrm{f}}}
\def\Vg{\SVEC{\mathrm{g}}}
\def\Vh{\SVEC{\mathrm{h}}}

\def\Ef{f} 
\def\Eg{g}
\def\Eh{h}

\def\Sef{\underline{f}} 
\def\Seg{\underline{g}}
\def\Seh{\underline{h}}


\def\Sy{{\mathbb S}} 
\def\sm{$\mathbb{S}$-module} 

\def\PRT{\mathfrak{P}} 
\def\SPRT{\text{\rm Part}}
\def\PP{\mathcal{P}} 
\def\cc{\gamma}      

\def\id{\text{\rm id}} 
\def\OEND{\text{\rm End}} 


\def\FDIFF{\text{\rm FDiff}} 

\def\BULL{\hspace{3pt}\bullet\hspace{3pt}} 
\def\STAR{\hspace{3pt}*\hspace{3pt}}       

\def\GRP{\mathfrak{G}}  
\def\HGRP{\widehat{\GRP}} 
\def\LIE{\mathfrak{g}}  
\def\HLIE{\widehat{\LIE}}  

\def\Gf{\alpha} 
\def\Gg{\beta} 
\def\UGf{\underline{\Gf}} 
\def\UGg{\underline{\Gg}} 

\def\Lf{\mu} 
\def\Lg{\nu} 
\def\ULf{\underline{\Lf}} 
\def\ULg{\underline{\Lg}} 


\newcommand{\beq}{\begin{equation}}
\newcommand{\eeq}{\end{equation}}
\newcommand{\beqa}{\begin{eqnarray}}
\newcommand{\eeqa}{\end{eqnarray}}
\newcommand{\beqs}{\begin{eqnarray*}}
\newcommand{\eeqs}{\end{eqnarray*}}
\newcommand{\nn}{\nonumber \\}
\def \podr {&& \hspace{0pt}}

\newcommand{\bnn}{\\ \nonumber}
\newcommand{\nnb}{\nonumber \\}


\newcounter{tmpc}
\newlength{\tmplenght}
\setlength{\tmplenght}{31pt}
\newlength{\tmplenghta}
\newlength{\tmplenghtb}
\newlength{\tmplenghtc}

\newenvironment{LIST}[1]{%
\setlength{\tmplenghta}{#1}
\setlength{\tmplenghtb}{#1}
\setlength{\tmplenghtc}{#1}
\advance\tmplenghtb-5pt
\advance\tmplenghtc 42pt
\setcounter{tmpc}{0}
\begin{list}{{\rm (\alph{tmpc})}}{\usecounter{tmpc}
\setlength{\leftmargin}{\tmplenghta}
\setlength{\rightmargin}{0cm}
\setlength{\itemsep}{1pt}
\setlength{\topsep}{3pt}
\setlength{\labelsep}{5pt}
\setlength{\labelwidth}{\tmplenghtb}
\setlength{\listparindent}{\tmplenghta}}
}{\end{list}}



\DeclareMathAlphabet{\mathbbm}{U}{bbm}{m}{n}

\DeclareSymbolFont{ltrs}     {OT1}{pzc}{m}{it}
\DeclareSymbolFont{ltrsa}     {OMS}{cmsy}{m}{n}
\DeclareSymbolFont{ltrsA}{U}{txmia}{m}{it}
\DeclareSymbolFont{symbolsC}{U}{txsyc}{m}{n}
\DeclareSymbolFont{ltrsB}{U}{rsfs}{m}{n}

\DeclareSymbolFontAlphabet{\mfrak}{ltrsA}
\DeclareMathAlphabet{\mathpzc}{OT1}{pzc}{m}{it}
\DeclareMathAlphabet{\mathrsfs}{U}{rsfs}{m}{n}


\def\GRPH{\Gamma} 
\def\VRT{\mathrm{vert}} 
\def\FLG{\mathrm{flag}} 
\def\flg{f} 
\def\MPO{s} 
\def\MPS{\sigma} 
\def\VCL{\mathrm{Colv}} 
\def\FCL{\mathrm{Colf}} 
\def\vcl{\mathrm{c}_{\text{\rm v}}} 
\def\fcl{\mathrm{c}_{\text{\rm f}}} 
\def\enu{\nu} 
\def\GRF{\mathbb{K}} 
\def\Dgm{\text{\rm Dgm}} 
\def\UROP{\mathfrak{R}} 
\def\JSE{J} 
\def\PERM{\Sy} 
\def\MND{\mathfrak{M}} 
\def\Mnd{\mathcal{M}} 
\newcommand{\CVR}[4]{C_{{#1},{#3}}({#2},{#4})} 
\def\MON{M} 


\def\GRF{\mathbb{K}} 
\def\Dgm{\text{\rm Dgm}} 
\def\UROP{\mathfrak{R}} 
\def\JSE{J} 
\def\PERM{\Sy} 


\def\Adm{\mathrsfs{E}} 
\def\SYS{\Phi} 
\def\TVER{\mathrsfs{V}} 


\def\COPCON{\kappa} 
\def\VCOPCON{\SVEC{\COPCON}}  
\def\CHNG{\mathrm{K}} 
\def\VCHNG{\SVEC{\CHNG}}
\def\REGPAR{\varepsilon} 
\def\PHQ{U} 
\def\RPHQ{\PHQ^{\text{\rm ren}}} 
\def\FINREN{\mathrm{X}} 
\def\VFINREN{\SVEC{\FINREN}} 
\def\RMOR{\Xi} 
\def\RRMOR{\GRP(\Xi)} 
\newcommand{\Wck}[2]{\text{\rm Wick}^{#1}_{#2}} 
\def\ver{I} 


\def\Z{\mathbb{Z}} 



\begin{document}

\title*{Operadic construction of the renormalization group}
\titlerunning{Operadic construction of the renormalization group}
\author{Jean-Louis Loday  and Nikolay M. Nikolov}
\authorrunning{J.-L. Loday and N.M. Nikolov}
\institute{Jean-Louis Loday
\at 
Institut de Recherche Math\'ematique Avanc\'ee
CNRS et Universit\'e de Strasbourg,
Zinbiel Institute of Mathematics
\and 
Nikolay M. Nikolov 
\at 
INRNE, Bulgarian Academy of Sciences, 
Tsarigradsko chaussee 72 Blvd., Sofia 1784, Bulgaria
}
\maketitle

\abstract{First, we give a functorial construction of a group associated to a symmetric operad. Applied to the endomorphism operad it gives the group of formal diffeomorphisms. Second, we associate a symmetric operad to any family of decorated graphs stable by contraction. In the case of Quantum Field Theory models it gives the renormalization group. As an example we get an operadic interpretation of the group of ``diffeographisms'' attached to the Connes-Kreimer Hopf algebra.}

\section{Introduction}
\label{sec:1}

The  combinatorics underlying the renormalization of Quantum Field Theory (QFT) is encoded into the  Feynman diagrams.  The diagram technique is a powerful tool in perturbative QFT. It was discovered by Connes and Kreimer that the combinatorics in renormalization can be described by a Hopf algebra structure on the space of Feynman diagrams since the attached group is the renormalization group. In this paper our aim is to systematize this procedure by means of symmetric operads. First we show that a family of decorated graphs which is stable for the contraction of the internal edges determines a symmetric operad. Second, we show that to any symmetric operad is attached a (formal) group which takes care of the symmetric group action. Combining the two constructions we get the construction of a group attached to families of diagrams. In the case of QFT we get the renormalization group.

\medskip

For the notation and terminology on operads we follow \cite{LV11} for which we refer for details.

\section{Operadic construction of the group of formal diffeomorphisms}
\label{sec:2}

Let $V$ $\equiv$ $\R^N$ be a vector space and
$\Vx = (x_1,\dots,x_N), \Vy, \Vz$ $\in$ $V$.
Consider the formal power series
\beqa\label{ee1.0}
\Vy \, = \, \Vf \bigl(\Vx\bigr)
\, = \podr \mathop{\sum}\limits_{n \, = \, 1}^{\infty}
\frac{1}{n!} \,
\mathop{\sum}\limits_{\mu_1,\dots,\mu_n \, = \, 1}^{N}
\Vf_{\mu_1,\dots,\mu_n} \,
x_{\mu_1} \cdots x_{\mu_n} \,,
\bnn
\Vz \, = \, \Vg \bigl(\Vy\bigr)
\, = \podr \mathop{\sum}\limits_{n \, = \, 1}^{\infty}
\frac{1}{n!} \,
\mathop{\sum}\limits_{\mu_1,\dots,\mu_n \, = \, 1}^{N}
\Vg_{\mu_1,\dots,\mu_n} \,
x_{\mu_1} \cdots x_{\mu_n} \,,
\eeqa
where $\Vf_{\mu_1,\dots,\mu_n}$ $=$
$(f_{\nu;\mu_1,\dots,\mu_n})_{\nu \, = \, 1}^N$
and $\Vg_{\mu_1,\dots,\mu_n}$ $=$
$(g_{\nu;\mu_1,\dots,\mu_n})_{\nu \, = \, 1}^N$
are the series coefficients.
Since these series do not have constant terms (i.e., terms with $n=0$)
it is well known that their composition
\beq\label{COMP0}
\Vz \, = \, \Vg\bigl(\Vf \bigl(\Vx\bigr)\bigr)
\, = \, \mathop{\sum}\limits_{n \, = \, 1}^{\infty}
\frac{1}{n!} \,
\mathop{\sum}\limits_{\mu_1,\dots,\mu_n \, = \, 1}^{N}
\Vh_{\mu_1,\dots,\mu_n} \,
x_{\mu_1} \cdots x_{\mu_n} \,,
\eeq
can be determined completely algebraically.
A~less popular fact is the formula for the coefficients
$\Vh_{\mu_1,\dots,\mu_n}$ $=$
$(h_{\nu;\mu_1,\dots,\mu_n})_{\nu \, = \, 1}^N$
of the composition series:%
\beq\label{ee1.1}
h_{\nu;\mu_1,\dots,\mu_n} \, =
\mathop{\sum}\limits_{\PRT \, \in \, \SPRT \{1,\dots,n\}} \
\mathop{\sum}\limits_{\rho_1,\dots,\rho_k \, = \, 1}^{N} \
g_{\nu;\rho_1,\dots,\rho_k} \,
f_{\rho_1;\mu_{i_{1,1}},\dots,\mu_{i_{1,j_1}}} \hspace{-2pt}\cdots
f_{\rho_k;\mu_{i_{k,1}},\dots,\mu_{i_{k,j_k}}} 
\,,
\eeq
which, in the case $N=1$, is known as the \emph{Fa\`a di Bruno formula}.
Here are the notations used in Eq.~(\ref{ee1.1}):
\begin{LIST}{33pt}
\item[$\bullet$]
the sum is over all partitions
\beq\label{ee1.2}
\PRT \, = \,
\Bigl\{
\bigl\{i_{1,1},\dots,i_{1,j_1}\bigr\},
\dots,
\bigl\{i_{k,1},\dots,i_{k,j_k}\bigr\}
\Bigr\} \,
\eeq
of the set $\{1,\dots,n\}$;
\item[$\bullet$]
in particular, $k$ is the cardinality $|\PRT|$ of the partition $\PRT$
and $j_1$, $\dots,$ $j_k$ are the cardinalities of its pieces;
\item[$\bullet$]
the partitions $\PRT$ are unordered,
but we shall introduce a ``canonical order'' such that
inside each group the elements are in increasing order
and the groups are ordered according to the order of their
minimal elements
\beq\label{ee1.3}
i_{\ell,1} < \cdots < i_{\ell,j_{\ell}}
\, , \quad
i_{1,1} < i_{2,2} < \cdots < i_{k,j_k} \,.
\eeq
\end{LIST}
Note that all the coefficients $\Vf_{\mu_1,\dots,\mu_n}$,
$\Vg_{\mu_1,\dots,\mu_n}$ and $\Vh_{\mu_1,\dots,\mu_n}$
are symmetric in their indices $\mu_1,\dots,\mu_n$
and hence, our convention in Eq. (\ref{ee1.1})
about the order on $\PRT$ is not essential.
However, we shall see that dropping the symmetry condition
on the coefficients still defines an associative product.

Let us try to simplify a little bit Eq. (\ref{ee1.1}) by absorbing
some summations:
the coefficients $\Vf_{\mu_1,\dots,\mu_n}$ define a multi-linear
map
\beq\label{ee1.6}
\Ef_n \, = \, \bigl(\Vf_{\mu_1,\dots,\mu_n}\bigr)
\, : \, V^{\otimes n} \to V \,
\eeq
and vice versa, every multi-linear map
$\Ef_n : V^{\times n} \to V$ defines a system of coefficients
$\Vf_{\mu_1,\dots,\mu_n}$ by its matrix elements.
Furthermore, the coefficients $\Vf_{\mu_1,\dots,\mu_n}$
are symmetric in $\mu_1,\dots,\mu_n$ iff the map
$\Ef_n$ is symmetric.
Similarly, we set
$$
\Eg_n \, = \, \bigl(\Vg_{\mu_1,\dots,\mu_n}\bigr)
\, : \, V^{\otimes n} \to V
\, , \quad
\Eh_n \, = \, \bigl(\Vh_{\mu_1,\dots,\mu_n}\bigr)
\, : \, V^{\otimes n} \to V \,
$$
($n=1,2,\dots$).
Then Eq. (\ref{ee1.1}) reads
\beq\label{ee1.4}
\Eh_n \, =
\mathop{\sum}\limits_{\PRT \, \in \, \SPRT \{1,\dots,n\}} \
\Eg_k
\circ
\bigl(
\Ef_{j_1} \hspace{-1pt} \otimes \cdots \otimes
\Ef_{j_k}
\bigr)
\circ
\sigma_{\PRT}
\,,
\eeq
where the numbers $k,j_1,\dots,j_k$
are defined by conventions (\ref{ee1.2}) and (\ref{ee1.3})
together with the permutation $\sigma_{\PRT} \in \Sy_n$, which is
$$
\sigma_{\PRT} \, := \,
\bigl(
i_{1,1},\dots,i_{1,j_1},
\ldots,
i_{k,1},\dots,i_{k,j_k}
\bigr) \,.
$$

Thus, the formal power series $\Vy = \Vf \bigl(\Vx\bigr)$ of formula (\ref{ee1.0})
is encoded by a sequence
$$
\Sef \, = \, \bigl(\Ef_1,\Ef_2,\dots,\Ef_n,\dots\bigr)
\, \in \,
\mathop{\prod}\limits_{n \, = \, 1}^{\infty}
\Hom \bigl(V^{\otimes n},V\bigr)^{\Sy_n}
$$
($\Hom \bigl(V^{\otimes n},V\bigr)^{\Sy_n}$ being the subspace of $\Sy_n$--invariant maps in $\Hom \bigl(V^{\otimes n},V\bigr)$).
The multiplication in
\(\mathop{\prod}\limits_{n \, = \, 1}^{\infty}
\Hom \bigl(V^{\otimes n},V\bigr)^{\Sy_n}\),
$$
\Seh \, = \,
\Seg \BULL \Sef \, := \,
\bigl(h_n\bigr)_{n \, = \, 1}^{\infty}
\,,
$$
that is defined by Eq. (\ref{ee1.4}) is associative.
It has a unit, the composition unit:
$$
\underline{1} \, = \,
(\IDE_V,0,\dots)
$$
Furthermore, if we assume that $\Ef_1 = \IDE_V$ (the identity map of $V$),
then $\Sef$ has a composition inverse
$\Sef^{-1}$ $=$ $\bigl(1,(\Sef^{-1})_2,\dots\bigr)$ since for $n>1$
we have
$$
0 \, = \, \bigl(\underline{1}\bigr)_n \, = \,
\bigl(\Sef^{-1} \BULL \Sef \bigr)_n \, = \,
(\Sef^{-1})_n + \Ef_n + \text{low order terms} \,,
$$
which inductively fixes $(\Sef^{-1})_n$.

The so described \emph{group of formal diffeomorphisms} is denoted by
\beq\label{ee1.7}
\FDIFF (V) \, \cong \,
\{\IDE_V\} \times
\mathop{\prod}\limits_{n \, = \, 2}^{\infty}
\Hom \bigl(V^{\otimes n},V\bigr)^{\Sy_n} \,.
\eeq
Note that the vector space $V$ can be even arbitrary linear vector space:
$N$ then will be the cardinality (possibly, infinite) of the linear basis of $V$ and the series
(\ref{ee1.0}) would be neither more nor less formal.
We note also that $f_{\nu;\mu_1,\dots,\mu_n}$ for fixed $\mu_1,\dots,\mu_n$ are
nonzero only for no more than a finite number of indices $\nu$ since they
are coordinates of the vector $\Vf_{\mu_1,\dots,\mu_n}$.
Hence, the correspondence $\Vf \bigl(\Vx\bigr)$ $\leftrightarrow$ $\Sef$ defined by (\ref{ee1.6}) remains valid and the composition
(\ref{COMP0}) is again well defined algebraically.

\section{Group associated to a symmetric operad}\label{Se-OGRP}

We now observe that the multiplication (\ref{ee1.4}) has a straightforward generalization in a symmetric operad  (see Eq. (\ref{ee1.4gen}) below).
Indeed, it uses two basic structures which are axiomatized in the operad theory.
These are the composition of multilinear maps and the right action of (or, composition with) permutations.

\begin{theorem}\label{Th1.1}
{\rm (\cite{LN12})}
There is a functor together with a subfunctor:
\beqs
\begin{array}{ccl}
\left\{\hspace{0pt}
\begin{array}{c}\text{\it Category of} \\
\text{\it Symmetric operads}\end{array}
\hspace{0pt}\right\}
& \hspace{5pt}\to\hspace{7pt} &
\left\{\hspace{0pt}
\begin{array}{c}\text{\it Category of} \\
\text{\it Groups}\end{array}
\hspace{0pt}\right\}
\raisebox{-18pt}{}
\\
\PP \, = \, \{\PP(n)\}_{n \, = \, 1}^{\infty}
& \hspace{5pt}\mapsto\hspace{7pt} &
\HGRP(\PP) \, = \, \{\id\} \times
\mathop{\prod}\limits_{n \, = \, 2}^{\infty}
\PP(n)
\\
& & \hspace{10pt}\bigcup
\raisebox{0.5pt}{\hspace{-2pt}\small $\|$}
\raisebox{-10pt}{}
\\
\PP \, = \, \{\PP(n)\}_{n \, = \, 1}^{\infty}
& \hspace{5pt}\mapsto\hspace{7pt} &
\GRP(\PP) \, = \, \{\id\} \times
\mathop{\prod}\limits_{n \, = \, 2}^{\infty}
\PP(n)^{\Sy_n}
\,,
\end{array}
\eeqs
where $\PP(n)^{\Sy_n}$ stands for the subspace
of\hspace{3pt} $\Sy_n$--invariant elements.
The multiplication law is given by
\beq\label{ee1.4gen}
(\UGg \BULL \UGf)_n \, =
\mathop{\sum}\limits_{\PRT \, \in \, \SPRT \{1,\dots,n\}} \
\cc \bigl(
\Gg_k ;\Gf_{j_1},\dots,\Gf_{j_k}
\bigr)^{\sigma_{\PRT}}
\,
\eeq
for $\UGf = (\Gf_n)_{n \, = \, 1}^{\infty}$ and
$\UGg = (\Gg_n)_{n \, = \, 1}^{\infty}$
and the notations of Eq. $(\ref{ee1.4})$.
On operadic morphisms $\vartheta : \PP \to \PP'$
$(=\{\vartheta_n : \PP (n) \to \PP' (n)\}_{n \, = \, 1}^{\infty})$
the functor gives
$$
\HGRP (\vartheta) \, := \,
\mathop{\prod}\limits_{n \, = \, 1}^{\infty}
\vartheta_n \,.
$$
In the case of $\OEND_V$ we have a natural isomorphism
\beq\label{ISOV}
\GRP \bigl(\OEND_V\bigr) \, \cong \,
\FDIFF (V) \,.
\eeq
\end{theorem}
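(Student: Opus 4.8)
The plan is to show that formula (\ref{ee1.4gen}) endows $\HGRP(\PP)=\{\id\}\times\prod_{n\ge 2}\PP(n)$ with a group structure, that this assignment is functorial, that $\Sy$--invariance is preserved (yielding the subfunctor $\GRP$), and finally that for $\PP=\OEND_V$ everything specializes to Section~\ref{sec:2}. Since the elements are sequences $(\Gf_n)_{n\ge 1}$ with $\Gf_1=\id$, and the product is defined componentwise by a finite sum over partitions, I would first note it is well defined (each $(\UGg\BULL\UGf)_n$ is a finite sum of elements of $\PP(n)$), and then verify in turn: associativity, the unit $\underline{1}=(\id,0,0,\dots)$, and the existence of inverses.

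Associativity is the heart of the matter, and I expect it to be the main obstacle. Expanding $((\underline{\eta}\BULL\UGg)\BULL\UGf)_n$ and $(\underline{\eta}\BULL(\UGg\BULL\UGf))_n$ with (\ref{ee1.4gen}), both become sums indexed by a two-step partition of $\{1,\dots,n\}$: on one side one first partitions into $k$ blocks and then partitions the set of blocks, on the other side one first partitions into coarse blocks and then refines each. I would set up the evident bijection between these two indexing sets --- both encode the same nested (hierarchical) partition --- and then show the corresponding summands agree. At the level of the composition maps $\cc$ this is exactly the operad associativity axiom; the subtlety, and the part requiring genuine care, is that the permutations $\sigma_{\PRT}$ attached to each partition must be reconciled. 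Pulling the inner permutations through the outer $\cc$ and recombining them into the single permutation demanded by the other bracketing is precisely the content of the left and block equivariance axioms of a symmetric operad, and this permutation bookkeeping is where I anticipate the only real difficulty.

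The unit axiom $\underline{1}\BULL\UGf=\UGf=\UGf\BULL\underline{1}$ reduces, after discarding every partition that meets a block on which some factor is $0$, to the operad unit axioms for $\id\in\PP(1)$. For inverses I would reuse verbatim the inductive argument of Section~\ref{sec:2}: in $(\UGf^{-1}\BULL\UGf)_n$ the one-block and the all-singletons partitions contribute $\Gf_n$ and $(\UGf^{-1})_n$ respectively (using $\Gf_1=(\UGf^{-1})_1=\id$ and the unit axiom), while every remaining partition involves only $(\UGf^{-1})_k$ with $k<n$; hence $(\UGf^{-1})_n$ is determined recursively, producing a left inverse, which together with associativity and the two-sided unit yields a group. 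For the subfunctor I would check that if all $\Gg_k$ and $\Gf_{j_i}$ are $\Sy$--invariant, then so is $(\UGg\BULL\UGf)_n$: the right $\Sy_n$--action permutes the partitions $\PRT$ among themselves and acts inside each block, so by equivariance of $\cc$ and because the sum runs over \emph{all} partitions it is invariant; invariance of inverses then follows from the recursive formula by induction, so $\GRP(\PP)$ is indeed a subgroup.

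Functoriality is routine: since an operadic morphism $\vartheta$ commutes with $\cc$, with the $\Sy_n$--actions, and with the units, applying $\prod_n\vartheta_n$ to (\ref{ee1.4gen}) term by term shows $\HGRP(\vartheta)$ is a group homomorphism preserving $\id$, and $\HGRP(\id)=\id$ together with $\HGRP(\vartheta'\circ\vartheta)=\HGRP(\vartheta')\circ\HGRP(\vartheta)$ hold componentwise; compatibility with the $\Sy_n$--actions gives the restriction to $\GRP$. Finally, for $\PP=\OEND_V$ the composition $\cc$ is composition of multilinear maps and the $\Sy_n$--action is precomposition by permutations, so (\ref{ee1.4gen}) becomes literally (\ref{ee1.4}) and $\PP(n)^{\Sy_n}=\Hom(V^{\otimes n},V)^{\Sy_n}$; thus the identity on the underlying product of hom-spaces is a group isomorphism $\GRP(\OEND_V)\cong\FDIFF(V)$, manifestly natural with respect to linear isomorphisms of $V$.
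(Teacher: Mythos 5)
Your proposal is correct and follows essentially the same route as the paper, which itself only remarks that associativity of $\BULL$ is the nontrivial point (verified by ``straightforward inspection'' of the partition combinatorics via the operad axioms) and that the unit and inverses follow by the same inductive argument as for $\FDIFF(V)$ in Section~2. You supply somewhat more detail than the paper does --- in particular the left-inverse construction, the $\Sy_n$-invariance check, and the functoriality --- but nothing in your outline diverges from the intended argument.
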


The most nontrivial part of the above statement is the associativity of the operation $\BULL$~(\ref{ee1.4gen}).
It can be proven by straightforward inspection.
The existence of a unit and inverse elements follows exactly by the same arguments as for the group of formal diffeomorphisms.

\begin{remark}\label{rm1}
There is a natural group associated with a non--symmetric operad $\PP = \bigl\{\PP_n\bigr\}_{n\geqslant 1}$
(see \cite[Sect. 5.8.15]{LV11}).
However when this construction is applied to a symmetric operad considered as a non-symmetric it gives a different group.
\end{remark}

We will give below some facts about the structure of of the groups related to symmetric operads.

\begin{proposition}\label{Pr2}
{\rm (\cite{LN12})}
Let us set for $m > 0$ 
$$
\HGRP_{m}(\PP) \, = \, 
\Bigl\{
\UGf = (\Gf_n)_{n \, = \, 1}^{\infty} \in \HGRP(\PP)
\Bigl| \Gf_2 = \cdots = \Gf_{m} = 0
\Bigr\}
$$
$($for $m=1$, $\HGRP_{1}(\PP):=\HGRP(\PP)$$)$.
Then $\HGRP_{m}(\PP)$ is a normal subgroup of $\HGRP(\PP)$.
\end{proposition}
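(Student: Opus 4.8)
The plan is to realize $\HGRP_m(\PP)$ as the kernel of a group homomorphism, so that normality becomes automatic and the subgroup axioms come for free. Everything rests on one structural feature of the product~(\ref{ee1.4gen}): for each $n$ the component $(\UGg \BULL \UGf)_n$ depends only on the components $\Gg_k$ and $\Gf_j$ of index at most $n$. Indeed, a partition $\PRT \in \SPRT\{1,\dots,n\}$ has $k = |\PRT| \le n$ blocks, each of size $j_i \le n$, so every summand $\cc(\Gg_k;\Gf_{j_1},\dots,\Gf_{j_k})^{\sigma_{\PRT}}$ involves only such low-index entries (the entries of index $1$ being the fixed operadic unit $\id$).

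First I would make this precise by isolating the two extreme partitions. For $k=1$ (the single block $\{1,\dots,n\}$, with $\sigma_{\PRT}=\id$) the unit axiom of the operad gives $\cc(\Gg_1;\Gf_n) = \cc(\id;\Gf_n) = \Gf_n$, while for $k=n$ (all singletons) it gives $\cc(\Gg_n;\id,\dots,\id) = \Gg_n$. Hence
\beq
(\UGg \BULL \UGf)_n \, = \, \Gf_n \, + \, \Gg_n \, + \, R_n(\UGg,\UGf) \,,
\eeq
where $R_n$ collects the partitions with $2 \le k \le n-1$ and therefore depends only on the $\Gg_k,\Gf_j$ with $2 \le k,j \le n-1$. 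This is exactly the ``low order terms'' already observed for $\FDIFF(V)$.

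Next I would introduce the truncation congruence: declare $\UGf \sim \UGf'$ whenever $\Gf_n = \Gf'_n$ for all $2 \le n \le m$. By the structural feature above, if $\UGf \sim \UGf'$ and $\UGg \sim \UGg'$ then $(\UGg \BULL \UGf)_n = (\UGg' \BULL \UGf')_n$ for every $2 \le n \le m$, so $\UGg \BULL \UGf \sim \UGg' \BULL \UGf'$; that is, $\sim$ is compatible with the group law. It is then a standard fact that the equivalence class of the unit $\underline 1 = (\id,0,0,\dots)$ is a normal subgroup: closure under product and inverse and invariance under conjugation all follow formally from compatibility (e.g.\ $\UGf \sim \underline 1$ gives $\UGg \BULL \UGf \BULL \UGg^{-1} \sim \UGg \BULL \underline 1 \BULL \UGg^{-1} = \underline 1$). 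Since this class is precisely $\{\UGf : \Gf_n = 0,\ 2 \le n \le m\} = \HGRP_m(\PP)$, the proposition follows. Equivalently one phrases it as: the truncation $p_m(\UGf) = (\id,\Gf_2,\dots,\Gf_m)$ is a surjective homomorphism onto the truncated group, with kernel $\HGRP_m(\PP)$.

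The only point that needs genuine checking is the structural feature that $(\UGg \BULL \UGf)_n$ uses no component of index $> n$, together with the correct use of the unit axiom to identify the two leading terms $\Gf_n$ and $\Gg_n$; but the former is immediate from counting the blocks and block-sizes of a partition of an $n$-element set, and after that the argument is purely formal. I therefore expect no serious obstacle, the elegance of the approach being precisely that it avoids any direct inductive computation of inverses or conjugates.
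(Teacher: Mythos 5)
Your argument is correct. The paper itself gives no proof of this proposition (it is quoted from \cite{LN12}), so there is nothing to match line by line; but your route --- observing that $(\UGg \BULL \UGf)_n$ involves only components of index $\leqslant n$, splitting off the two extreme partitions via the operadic unit axiom to get $(\UGg \BULL \UGf)_n = \Gf_n + \Gg_n + R_n$ with $R_n$ depending only on indices $2,\dots,n-1$, and then realizing $\HGRP_m(\PP)$ as the congruence class of the unit (equivalently, the kernel of the truncation homomorphism) --- is exactly the structure the paper relies on immediately afterwards, when it writes $\HGRP(\PP) = \mathop{\lim}\limits_{\longleftarrow} \HGRP(\PP)/\HGRP_m(\PP)$ and identifies the quotients with the truncated products carrying the truncated product $\BULL$. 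The only points worth stating explicitly if you write this up are (i) that a multiplicative congruence on a group is automatically compatible with inversion, so the class of the unit really is a normal subgroup without a separate check of inverses, and (ii) that the blocks of size $1$ occurring in $R_n$ contribute only the fixed unit $\id \in \PP(1)$, so they do not spoil the claim that $R_n$ is determined by the components of index between $2$ and $n-1$. Both are one-line verifications, and you have flagged the second already; I see no gap.
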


Note that
$$
\HGRP(\PP) \, = \, \mathop{\lim}\limits_{\longleftarrow} \, \HGRP(\PP) \bigl/ \HGRP_m(\PP)
$$
and in the case when the operadic spaces $\PP(n)$ are finite dimensional the quotient groups
are (finite dimensional) Lie groups.
Hence, in the latter case  the group $\HGRP(\PP)$ is a \emph{pro-Lie group}.
We use this fact to derive the Lie algebra corresponding to the group $\HGRP(\PP)$ together with the exponential map.

\begin{theorem}\label{th03}
{\rm (\cite{LN12})}
The Lie algebra corresponding to the group $\HGRP(\PP)$ is 
$$
\HLIE(\PP) \, = \, \{0\} \, \times \,
\mathop{\prod}\limits_{n \, = \, 2}^{\infty}
\PP(n)
$$
The Lie bracket on $\HLIE(\PP)$ is built from a pre-Lie bracket
$$
[\ULf,\ULg] \, = \, \ULf \STAR \ULg - \ULg \STAR \ULf
$$
$($$\ULf,\ULg \in \HLIE(\PP)$$)$, where\footnote{%
$\circ_i$ is the $i$th operadic partial composition}
\beqa\label{ee1.4gen2}
(\ULf \STAR \ULg)_n 
& = &
\mathop{\sum}\limits_{\emptyset \, \neq \, J \, \subseteq \, \{1,\dots,n\}} \
\bigr(
\Lg_k \, \circ_{\min \, J} \, \Lf_{j}
\bigr)^{\sigma_{\PRT_J}}
\nnb
& \equiv &
\mathop{\sum}\limits_{\emptyset \, \neq \, J \, \subseteq \, \{1,\dots,n\}} \
\cc \bigl(
\Lg_k ;\id,\dots,\id,\mathop{\Lf_{j}}\limits_{\mathop{}\limits^{\uparrow}_{\min \, J}},\id,\dots,\id
\bigr)^{\sigma_{\PRT_J}} 
\eeqa
where $j=|J|$ and the partition $\PRT_J$ is the partition $\bigl\{\{i\}\bigl| i \in \{1,\dots,n\}\backslash J\bigr\} \cup \{J\}$. $($Note that the sum in $(\ref{ee1.4gen2})$ is the subsum in $(\ref{ee1.4gen})$ corresponding to partitions $\PRT$ of a form $\PRT_J$.$)$
\end{theorem}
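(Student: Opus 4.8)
The plan is to exploit the pro-Lie structure recorded just before the statement. By Proposition~\ref{Pr2} the $\HGRP_m(\PP)$ form a descending chain of normal subgroups with $\HGRP(\PP)=\varprojlim\HGRP(\PP)/\HGRP_m(\PP)$, so both the underlying space and the bracket of $\HLIE(\PP)$ may be computed on the finite stages and assembled in the limit; since the group law is purely algebraic this is in effect a formal-group-law computation. First I would pin down the underlying vector space. The identity of $\HGRP(\PP)$ is $\id=(\id,0,0,\dots)$, and since the arity-one component of every group element is constrained to the constant $\id$, differentiating the product (\ref{ee1.4gen}) along a curve through $\id$ produces a tangent vector whose arity-one component vanishes; the free components in arities $\geq 2$ give $\HLIE(\PP)=\{0\}\times\prod_{n\geq 2}\PP(n)$.

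Next I would obtain the bracket from the standard principle that, in coordinates centred at the identity, the Lie bracket is the antisymmetrization of the bilinear part of the group law. Writing $\UGf=\id+\ULf$ and $\UGg=\id+\ULg$ with $\ULf,\ULg\in\HLIE(\PP)$, I grade each summand $\cc(\Gg_k;\Gf_{j_1},\dots,\Gf_{j_k})$ of (\ref{ee1.4gen}) by its number of entries of arity $\geq 2$: the head contributes such an entry exactly when $k\geq 2$, and a slot entry $\Gf_{j_\ell}$ contributes one exactly when $j_\ell\geq 2$ (entries of arity $1$ being the fixed unit $\id$). The bilinear (degree-two) part is therefore carried precisely by the partitions with a single block of size $\geq 2$ and all remaining blocks singletons, i.e.\ by the $\PRT_J$ of the statement; a short check against the canonical order (\ref{ee1.3}) shows the distinguished block occupies slot $\min J$. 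This reproduces formula (\ref{ee1.4gen2}) for $\STAR$, the terms with $|J|=1$ or $|J|=n$ dropping out because $\ULf$ and $\ULg$ have vanishing arity-one component, and, with the sign convention fixing the Lie functor on the pro-Lie group, it yields $[\ULf,\ULg]=\ULf\STAR\ULg-\ULg\STAR\ULf$.

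The substantive step is to show that $\STAR$ is a (left) pre-Lie product, for this delivers the Jacobi identity automatically and certifies that $[\,\cdot\,,\,\cdot\,]$ is a Lie bracket. I would verify that the associator $(\ULf\STAR\ULg)\STAR\underline{\lambda}-\ULf\STAR(\ULg\STAR\underline{\lambda})$ is symmetric in $\ULf$ and $\ULg$. Expanding both iterated products by (\ref{ee1.4gen2}) gives a double sum of insertions; the sequential operad associativity axiom matches the ``nested'' contributions, in which $\ULf$ is inserted into $\ULg$ and the result into $\underline{\lambda}$, so that these cancel between the two terms, while the ``disjoint'' contributions, in which $\ULf$ and $\ULg$ are inserted into two different slots of $\underline{\lambda}$, are manifestly symmetric under $\ULf\leftrightarrow\ULg$ by the parallel-composition axiom. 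This is the insertion pre-Lie structure carried by any operad. The one real obstacle is the bookkeeping of the symmetric-group twists $\sigma_{\PRT_J}$: one must check that when blocks are nested or placed side by side the twists recombine exactly as prescribed by the equivariance axiom of the symmetric operad, so that the permutations decorating matching terms on the two sides coincide. Granting this, passing to the projective limit over $m$ identifies $\HLIE(\PP)$, with the bracket just described, as the Lie algebra of $\HGRP(\PP)$.
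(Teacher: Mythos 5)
Your proposal is correct and follows essentially the same route the paper indicates: the paper gives no written-out proof of Theorem~\ref{th03} (it defers to \cite{LN12}), but the surrounding discussion --- realizing $\HGRP(\PP)$ as the inverse limit of the finite-dimensional Lie groups $\HGRP(\PP)/\HGRP_m(\PP)$ and reading off the Lie algebra and the products as the truncations of $\BULL$ and $\STAR$ --- is exactly the strategy you carry out. Your identification of the bilinear part of (\ref{ee1.4gen}) with the subsum over partitions of the form $\PRT_J$, and the nested/disjoint insertion argument for the pre-Lie identity, are the standard way to complete that outline.
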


The Lie algebra  $\HLIE(\PP)$ is again an inverse limit of finite dimensional Lie algebras
$$
\HLIE(\PP) \, = \, \mathop{\lim}\limits_{\longleftarrow} \, \HLIE(\PP) \bigl/ \HLIE_m(\PP)
$$
where $\HLIE_m(\PP)$ is the ideal
$$
\HLIE_{m}(\PP) \, = \, 
\Bigl\{
\ULf = (\Lf_n)_{n \, = \, 1}^{\infty} \in \HLIE(\PP)
\Bigl| \Lf_2 = \cdots = \Lf_{m} = 0
\Bigr\} \,.
$$
Note that the quotient group $\HGRP(\PP) \bigl/ \HGRP_m(\PP)$ and Lie algebra
$\HLIE(\PP) \bigl/ \HLIE_m(\PP)$ are isomorphic as sets to the set
\(
\mathop{\prod}\limits_{n \, = \, 2}^{m} \PP(n)
\)
and the group and pre--Lie products on this set are just $\BULL$~(\ref{ee1.4gen}) and $*$~(\ref{ee1.4gen2}) truncated up to order $m$.

\section{Feynman diagrams and their combinatorics}
\label{Se3}

Feynman diagrams are a powerful tool in perturbation theory. 
They indicate the terms of perturbative expansions. 
Furthermore, many manipulation on the corresponding formal perturbation series have a combinatorial description by operations on diagrams. 

\bigskip

\noindent
{\bf a) Basic definitions} 

\medskip

\noindent
A Feynman diagram is a finite graph with various decorations.

A graph $\GRPH$ is a set of points, called \emph{vertices}, with attached \emph{flags}  (or \emph{half-edges}) to them. Some pairs of these flags are further joined to become edges connecting the corresponding vertices. All these structures are contained in the following data: two finite sets, the set of vertices $\VRT(\GRPH)$ and the set of flags
$\FLG(\GRPH)$, and two maps
\beq\label{strmap1}
\MPO: \FLG(\GRPH) \to \VRT(\GRPH)\, , \quad \MPS: \FLG(\GRPH) \to 
\FLG(\GRPH) 
\eeq
such that $\MPS^2 = \id$.
Thus, the map $\MPO$ represents the process of attaching flags to vertices, i.e., the flag $\flg$ is attached to the vertex $\MPO(\flg)$. The map $\MPS$ represents the process of joining flags, i.e., the flag $\flg$ is joined with the flag $\MPS(\flg)$. In the latter case if $\flg=\MPS(\flg)$ then we call this flag an external line; such a line is attached to one only vertex. If $\flg\neq\MPS(\flg)$ then the unordered pair $\{\flg,\MPS(\flg)\}$ form an edge, or an internal line of the graph, which is attached to the vertices $\MPO(\flg)$ and $\MPO(\MPS(\flg))$.  When $\MPO(\flg)=\MPO(\MPS(\flg))$ but $\flg\neq\MPS(\flg)$ we have an internal line attached to one and the same vertex. Such an internal line is called a tadpole and it is usually excluded to exist.

\medskip

To every graph we assign a topological space: its \emph{geometric realization}.
To this end we assign to each edge a copy of the closed interval $[0,1]$ (without the orientation) and to each vertex a point.
Then we glue all of these spaces according to the incidence between the edges and the vertices.

\medskip

A \emph{decorated graph} is a graph with some extra data. Forgetting these extra structure we obtain just a graph that is called the \emph{body} of the decorated graph. We shall consider graphs with the following decorations:

\medskip

a) \emph{Colors} for the vertices and for the flags. They form two sets
\begin{LIST}{33pt}
\item[--]
a set of colors for the vertices: $\VCL$
\item[--]
a set of colors for the flags: $\FCL$
\end{LIST}
Then we have maps assigning colors:
\beq\label{strmap2}
\vcl : \VRT(\GRPH) \to \VCL \,,\quad
\fcl : \FLG(\GRPH) \to \FCL \,.
\eeq

\medskip

b) The second type of decoration we shall consider is an \emph{enumeration} 
\beq\label{strmap3}
\enu: \VRT(\GRPH) \cong \{1,\dots,n\}
\eeq
of the set of vertices.

\bigskip

\noindent
{\bf b) Examples} 

\medskip

\noindent
These are the notion of graph and decorated graph, or also diagram. Here are some examples to illustrate them.

\begin{example}\label{exm01}
An example of a graph is: $\VRT(\GRPH)$ $=\{0,1\},$ $\FLG(\GRPH)$ $=$ $\{a,b,c,d,e,f\},$ $\MPO(a)$ $=$ $\MPO(b)$ $=$ $\MPO(c)$ $=$ $0,$ $\MPO(d)$ $=$ $\MPO(e)$ $=$ $\MPO(f)$ $=$ $1,$ $\MPS(a)$ $=$ $a,$ $\MPS(b)$ $=$ $e,$ $\MPS(c)$ $=$ $d,$ $\MPS(f)$ $=$ $f$. The geometric realization is:
\begin{center}
\raisebox{0pt}{\includegraphics[scale=0.39]{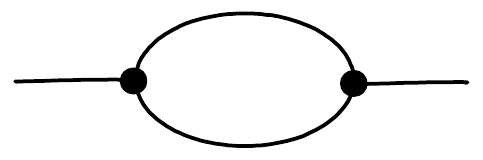}}
\end{center}
\end{example}

\begin{example}\label{exm02}
A decoration for the graph in Example \ref{exm01} is provided by $\VCL=\{\bullet\}$, $\FCL=$ $\bigl\{$\raisebox{-3pt}{\includegraphics[scale=0.25]{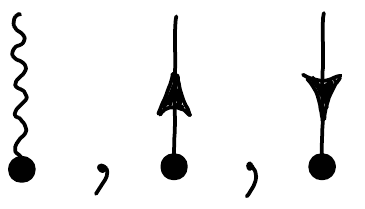}}$\bigr\}$, and coloring maps:
$\vcl(0)=\vcl(1)=\bullet$,  $\fcl(a)=$ \raisebox{-3pt}{\includegraphics[scale=0.25]{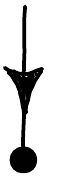}},
$\fcl(b)=$ \raisebox{-3pt}{\includegraphics[scale=0.25]{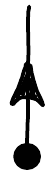}},
$\fcl(e)=$ \raisebox{-3pt}{\includegraphics[scale=0.25]{JLL-NMN_fig05c.pdf}},
$\fcl(f)=$ \raisebox{-3pt}{\includegraphics[scale=0.25]{JLL-NMN_fig05b.pdf}},
$\fcl(c)=$ \raisebox{-3pt}{\includegraphics[scale=0.25]{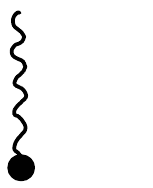}} $=\fcl(d)$.
The result can be drawn as
\begin{center}
\raisebox{0pt}{\includegraphics[scale=0.39]{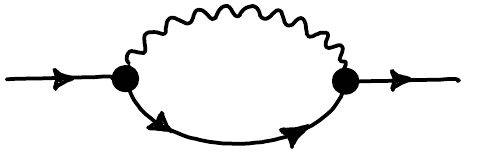}}
\end{center}
So, we indicated the colors in this example by shapes, which is common in physics.
Also if the colors of two joined flags coincide we indicate this as a color of the corresponding edge. 
In the above example we also meet situation of edges of the form \raisebox{0pt}{\includegraphics[scale=0.25]{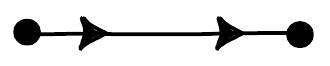}} and in this case it is also convenient to think of such an edge as an oriented edge \raisebox{0pt}{\includegraphics[scale=0.25]{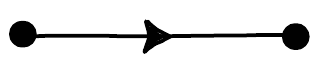}}. Then we can draw the diagram of this example as
\begin{center}
\raisebox{0pt}{\includegraphics[scale=0.39]{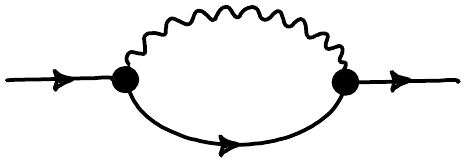}}
\end{center}
\end{example}

\bigskip

\noindent
{\bf c) Types of graphs and diagrams} 

\medskip

\noindent
A graph is called \emph{connected} if its geometric realization is a connected space.

Another important type of graphs are the so-called \emph{one particle irreducible} (1PI) graphs. A graph $\GRPH$ is called one particle irreducible if it is connected and after cutting any of its inner edges it remains connected.
Here cutting of an inner edge determined by a pair of flags $\flg\neq\MPS(\flg)$ means to change the second structure map $\MPS$ to a new map $\MPS'$ such that $\MPS'(\flg'):=\MPS(\flg)$ if $\flg' \neq \flg$ and $\flg'\neq\MPS(\flg)$, and $\MPS'(\flg'):=\flg'$ if $\flg'=\flg$ or $\flg'=\MPS(\flg)$.
We shall impose in addition the requirement that 1PI graphs have no tadpoles and have at least two vertices (or equivalently, at least one inner edge).

\medskip

If the body of a decorated graph is connected, then the graph is also called connected. Similarly a decorated graph is called 1PI if its body is 1PI.

\bigskip

\noindent
{\bf e) Operations on graphs and diagrams} 

\medskip

\noindent
A \emph{subgraph} of a graph $\GRPH$ is a subset $J \subseteq \VRT(\GRPH)$. It determines a graph $\GRPH_J$ as follows: the set of vertices of $\GRPH_J$ is $\VRT(\GRPH_J) := J \subseteq \VRT(\GRPH)$. The set of flags of $\GRPH_J$ is $\FLG(\GRPH_J) := \MPO^{-1} (J) \equiv \MPO^{-1} \bigl(\VRT(\GRPH_J)\bigr)$ and we set the map $\MPO_J : \FLG(\GRPH_J) \to \VRT(\GRPH_J)$ to be the restriction of the map $\MPO$. The map $\MPS_J : \FLG(\GRPH_J) \to \FLG(\GRPH_J)$ coincides with $\MPS$ whenever $\flg$ and $\MPS(\flg)$ belong to $\FLG(\GRPH_J)$: such pairs $\{\flg,\MPS(\flg)\}$ of different flags are the inner edges of the subgraph. For the remaining $\flg \in \FLG(\GRPH_J)$ we set $\MPS_J(\flg)=\flg$ and they are the outer edges of the subgraph.
Note that the outer edges of the graph $\GRPH_J$ are either outer edges of $\GRPH$ attached to a vertex in $J$ or they are inner edges of $\GRPH$ with only one end belonging to $J$.

If the graph $\GRPH$ is colored then the graph $\GRPH_J$  determined by a subgraph $J$ has an induced coloring defined just by the restrictions of the coloring maps $\vcl$ and $\fcl$ to $\VRT(\GRPH_J)$ and $\FLG(\GRPH_J)$, respectively.

If the graph $\GRPH$ is enumerated, then the graph $\GRPH_J$ has an induced enumeration provided by the unique monotonically increasing isomorphism $\enu (J) \cong \{1,\dots,|J|\}$.

\medskip

Another important operation on graphs is the \emph{contraction} of a subgraph.

For every graph $\GRPH$ and its subgraph $J \subseteq \VRT(\GRPH)$ we define the contracted graph $\GRPH/J$ as follows.
We introduce a new vertex $v_J$, which for the sake of definiteness can be identified with the set $J$. Then we set
\beqs
\VRT(\GRPH/J) \, := \podr
\bigl(\VRT(\GRPH) \backslash J\bigr) \cup \{v_J\} \,, \\
\FLG(\GRPH/J) \, := \podr
\bigl\{\flg \in \FLG(\GRPH) \, \bigl| \text{ if } \MPO(\flg) \text{ and } \MPO(\MPS(\flg)) \in J \text{ then } \flg=\MPS(\flg) \bigr\} 
\\ \, \equiv \podr
\FLG(\GRPH) \bigl\backslash \bigl\{\flg \in \FLG(\GRPH) \, \bigl| \, \MPO(\flg), \MPO(\MPS(\flg)) \in J \text{ and } \flg \neq \MPS(\flg)\bigr\}
\,,
\eeqs
in other words, $\FLG(\GRPH/J)$ contains all the flags of $\FLG(\GRPH)$ except those ones that form the inner edges of the graph $\GRPH_J$.
The structure maps $\MPO_{\GRPH/J}$ and  $\MPS_{\GRPH/J}$ are defined as follows:
\beqs
&
\MPO_{\GRPH/J}(\flg) \, := \, \MPO(\flg) \ \text{ if } \ \MPO(\flg) \notin J \quad \text{ and } \quad \MPO_{\GRPH/J}(\flg) \, := \, v_J \ \text{ if } \ \MPO(\flg) \in J \,,
& \\ &
\MPS_{\GRPH/J} \, := \, \MPS \bigl|_{\VRT(\GRPH/J)} \,, 
&
\eeqs
where the second identity is provided by the fact that $\FLG(\GRPH/J)$ is defined as a $\MPS$--invariant subset.
To summarize, the graph $\GRPH/J$ is obtained by shrinking all the vertices in $J$ to a single vertex $v_J$ and removing all the internal lines of $\GRPH_J$.
Note that if the graph $\GRPH$ is connected or 1PI, respectively, then so is $\GRPH/J$.

If the graph $\GRPH$ is colored, then for every pair $(J,L)$ consisting of a subset $J \subseteq \VRT(\GRPH)$ and an element $L \in \VCL$ we can define a colored contracted graph $\GRPH/(J,L)$ constructed as the graph $\GRPH/J$ endowed with the following coloring maps $\vcl'$ and $\fcl'$:
\beqs
&
\vcl'
\bigl|_{\VRT(\GRPH) \backslash J}
\, := \,
\vcl \bigl|_{\VRT(\GRPH) \backslash J}
\,,\qquad
\vcl'
(v_J) 
\, := \, L 
\,,
& \\ &
\fcl'
\, := \,
\fcl \bigl|_{\FLG(\GRPH/J)} \,.
&
\eeqs

Finally, if we have an enumerated graph $\GRPH$, then the contracted graph $\GRPH/J$ will be endowed with the enumeration provided by the unique monotonically increasing isomorphism
$$
\enu(\VRT(\GRPH) \backslash J) \cup \{\min \enu(J)\} \cong
\{1,\dots,n-|J|+1\} \,.
$$

\medskip

Note that if the graph $\GRPH$ has no tadpoles, then the graphs $\GRPH_J$ and $\GRPH/J$ have no tadpoles for every subgraph $J$ of $\GRPH$.

\bigskip

\noindent
{\bf f) Isomorphic diagrams} 

\medskip

\noindent
Let us introduce the notion of an \emph{isomorphism} of two enumerated diagrams $\GRPH$ and $\GRPH'$. We shall treat two such diagrams as identical. An isomorphism of graphs $\GRPH \cong \GRPH'$ consists of a pair of bijections $j_v:\VRT(\GRPH) \cong \VRT(\GRPH')$ and $j_f:\FLG(\GRPH) \cong \FLG(\GRPH')$, which commute with the structure maps $\MPO, \MPO'$ and $\MPS, \MPS'$, respectively. 
In other words, 
$j_v \circ \MPO = \MPO' \circ j_f$ and
$j_f \circ \MPS = \MPS' \circ j_v$.
An isomorphism of colored graphs is an isomorphism of graphs, which in addition satisfies
$\vcl = \vcl' \circ j_v$ and $\fcl = \fcl' \circ j_f$
(compatibility with the coloring maps).
Finally, an isomorphism of enumerated colored graphs is an  isomorphism of colored graphs which
preserves the enumeration.
Let 
\beqa\label{Dgm-def}
\Dgm(n) & := & \text{set of all equivalence classes of isomorphic }
\nnb & & \text{enumerated colored graphs with $n$ vertices}.
\eeqa

\bigskip

\noindent
{\bf g) Combinatorial Feynman rules, or, representation of diagrams in a monoid} 

\medskip

\noindent
There is a convenient one-to-one correspondence between the equivalence classes of isomorphic enumerated colored graphs and the elements (monomials) of a commutative monoid.
This construction follows on an abstract algebraic (or combinatorial) level the so called ``Feynman rules'' that assign in QFT to every Feynman diagram an analytic expression.
Let 
\beqa
\MND(n) & := & \text{the free commutative monoid with a set of generators}
\nn
\label{SG1}
& &
\bigl(\{1,\dots,n\} \times \VCL \bigr) \cup 
\bigl(\{1,\dots,n\} \times \FCL \bigr) \cup
\bigl(\{1,\dots,n\} \times \FCL \bigr)^{\times 2} .
\qquad
\eeqa
Let us introduce ``physical'' names and notation for the elements in the above three disjoint sets. We call the elements of $\FCL$ the basic ``fields'' and denote them by $\phi$, $\psi$, etc. Then the element $(i,\phi)$ $\in$ $\{1,\dots,n\} \times \FCL$ will be denoted by $\phi(i)$ and called a ``field at the point $i$''. 
Next, the elements $(i,\phi;j,\psi)$ $\in$ $\bigl(\{1,\dots,n\} \times \FCL \bigr)^{\times 2}$ will be denoted by $\CVR{\phi}{i}{\psi}{j}$ and will be called ``propagators''. 
Finally, the elements $L \in \VCL$ will be called ``interactions'' and a pair $(i,L)$ $\in$ $\{1,\dots,n\} \times \VCL$ will be called an interaction at the point $i$ and will be denoted by $L(i)$. 

Thus, in the above notations the set of generators (\ref{SG1}) for the monoid $\MND(n)$ reads:
\beqa\label{NSGEN}
&
\bigl\{L(i) \, \bigl| \, L \in \VCL, i =1,\dots,n\bigr\} \cup
\bigl\{\phi(i) \, \bigl| \, \phi \in \FCL, i=1,\dots,n\bigr\} 
& \nn &
\cup
\bigl\{\CVR{\phi}{i}{\psi}{j} \, \bigl| \, \phi,\psi \in \FCL, i,j = 1\dots,n\bigr\} \,.
&
\eeqa

Now, to each enumerated colored graph $\GRPH$ we assign a monomial in $\MND(n)$ in the following way.
To the vertex $\enu^{-1}(i)$ (i.e., to the vertex with number $i$) we assign $L(i)$ if its color is $L \in \VCL$. To each outer edge attached to the vertex $\enu^{-1}(i)$ we assign $\phi(i)$ if the color of the corresponding flag is $\phi \in \FCL$. To each inner edge connecting the vertices $\enu^{-1} (i)$ and $\enu^{-1}(j)$ we assign 
$\CVR{\phi}{i}{\psi}{j}$
if the colors of the flags attached to $\enu^{-1} (i)$ and $\enu^{-1}(j)$ are $\phi$ and $\psi$, respectively. Finally, we multiply all the above obtained generators in $\MND(n)$.
The resulting monomial in $\MND(n)$ is denoted by $\MON_{\GRPH}$.

\begin{example}\label{exm03}
In the case of Example \ref{exm02} with vertex enumeration $\enu(0)=1$, $\enu(1)=2$ we have
$$
\MON_{\GRPH} \, = \, \overline{\psi}(1) \, \psi(2) \, L(1) \, L(2) \, \CVR{A}{1}{A}{2} \, \CVR{\psi}{1}{\overline{\psi}}{2}
\,,
$$
where we denoted now the colors by letters:
$L:=\bullet\in\VCL$ and
$A$ $:=$ \raisebox{-3pt}{\includegraphics[scale=0.25]{JLL-NMN_fig05a.pdf}}\,,
$\psi$ $:=$ \raisebox{-3pt}{\includegraphics[scale=0.25]{JLL-NMN_fig05b.pdf}}\,,
$\overline{\psi}$ $:=$~\raisebox{-3pt}{\includegraphics[scale=0.25]{JLL-NMN_fig05c.pdf}}\,.
\end{example}

\begin{proposition}\label{Pro4.3}
The correspondence $\GRPH \mapsto \MON_{\GRPH}$ is a bijection $\Dgm(n) \cong \MND(n)$, i.e., it is a one-to-one correspondence between the equivalence classes of isomorphic enumerated colored graphs with $n$ vertices and the elements of the monoid $\MND(n)$.
\end{proposition}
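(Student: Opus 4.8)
The plan is to exhibit an explicit inverse to $\GRPH \mapsto \MON_\GRPH$ and verify that the two maps are mutually inverse. First I would record the reduction that makes the statement tractable: a free commutative monoid is nothing but the set of finitely supported multisets of its generators, so an element of $\MND(n)$ is the same datum as an unordered collection (with multiplicities) of the symbols $L(i)$, $\phi(i)$ and $\CVR{\phi}{i}{\psi}{j}$. On the graph side the crucial point is that for \emph{enumerated} colored graphs an isomorphism must preserve the enumeration, so the vertex bijection $j_v$ is forced to be the identity on the labels $\{1,\dots,n\}$ (from $\enu = \enu' \circ j_v$ one gets $j_v = \enu'^{-1}\circ\enu$). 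An isomorphism therefore amounts to a single bijection $j_f$ of flags that sits over fixed labeled vertices (commuting with $\MPO$), respects the joining (commuting with $\MPS$) and preserves flag colors ($\fcl = \fcl' \circ j_f$). Consequently the isomorphism class of $\GRPH$ is completely described by the local-at-each-vertex data together with the joining, each flag being anonymous apart from the vertex it sits on and its color.

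Next I would set up the two maps precisely. The forward map is well defined on isomorphism classes because $\MON_\GRPH$ is assembled only out of invariant data — the vertex color $\vcl(\enu^{-1}(i))$, the color of each external flag, and the endpoint-labels-plus-colors of each internal edge — all preserved by any enumerated-colored-graph isomorphism; the one point needing a convention is which of the two ends of an internal edge is listed first inside the symbol $\CVR{\phi}{i}{\psi}{j}$, which I would fix once and for all (e.g.\ by the order of the vertex labels, with a secondary rule for tadpoles). For the inverse I read a monomial off as a multiset of generators and build a graph on the labeled vertex set $\{1,\dots,n\}$: a factor $L(i)$ colors vertex $i$ by $L$, a factor $\phi(i)$ creates an external half-edge of color $\phi$ at vertex $i$, and a factor $\CVR{\phi}{i}{\psi}{j}$ creates an internal edge consisting of a half-edge of color $\phi$ at $i$ joined to a half-edge of color $\psi$ at $j$. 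That the composite ``reconstruct, then apply $\MON$'' returns the original multiset is immediate from the recipe; the content lies in the other composite.

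The main obstacle is the injectivity half, i.e.\ showing $\MON_\GRPH$ is a \emph{complete} invariant: if $\MON_\GRPH = \MON_{\GRPH'}$ then $\GRPH \cong \GRPH'$. Here equality of multisets gives, for every label $i$, the same (necessarily unique) vertex color, and for every $i$ and every flag color $\phi$ the same number of external half-edges, and for every type $\{(i,\phi),(j,\psi)\}$ the same number of internal edges. One then manufactures the flag bijection $j_f$ by matching external half-edges of equal (vertex, color) type and matching internal edges of equal type. The delicate part is carrying out the internal-edge matching so that $j_f$ respects $\MPS$ \emph{simultaneously at both endpoints of every edge} — the two joined flags of an edge of $\GRPH$ must be sent to the two joined flags of the corresponding edge of $\GRPH'$ — while staying coherent with the endpoint-ordering convention and with tadpoles, where both joined flags sit over the same vertex. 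Once this flag-level matching is organized consistently, $j_f$ together with $j_v = \id$ is an isomorphism of enumerated colored graphs, which simultaneously yields well-definedness and injectivity of the forward map; surjectivity is already supplied by the reconstruction recipe, completing the proof that $\GRPH \mapsto \MON_\GRPH$ is a bijection $\Dgm(n) \cong \MND(n)$.
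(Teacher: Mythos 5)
Your proof is correct and takes essentially the same route as the paper's, which is only a two-sentence sketch (injectivity is declared clear, surjectivity is handled by constructing a diagram realizing each monomial); you have simply filled in both halves, namely the explicit inverse construction and the verification that $\MON_{\GRPH}$ is a complete invariant of the enumerated colored isomorphism class, including the one genuinely delicate point (matching joined flag pairs consistently at both ends of each internal edge). The only residual issue is the ordering of the two ends inside a propagator symbol $\CVR{\phi}{i}{\psi}{j}$, which you correctly flag as needing a fixed convention and which the paper glosses over entirely.
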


\begin{proof}
It is clear that $\GRPH \mapsto \MON_{\GRPH}$ maps injectively the equivalence classes of diagrams to elements of $\MND(n)$. To see that this map is surjective one constructs for every element of $\MND(n)$ a diagram that 
reproduces this monomial.$\quad\Box$
\end{proof}

\section{The universal contraction operad}\label{S4.2n}

Recall that $\Dgm(n)$ is the set of all equivalence classes of isomorphic enumerated colored graphs with $n$ vertices.
Let us define
\beq\label{UROP}
\UROP (n) \, := \,
\text{Hom}_{\GRF} \Bigl(\GRF^{(\Dgm(n))}, \GRF^{(\VCL)}\Bigr)
\, \cong \,
\GRF^{\Dgm(n) \times \VCL} \,,
\eeq
where $\GRF^{(I)}$ stands for the vector space over the ground field (ring) $\GRF$ spanned by a basis indexed by $I$ and 
the existence of 
the second \emph{canonical} isomorphism 
follows in the case when $\VCL$ is a \emph{finite} set, which we shall assume further. This canonical isomorphism 
is provided by the decomposition
\beq\label{Frm}
Q (\GRPH) \, = \, 
\mathop{\sum}\limits_{L \, \in \, \VCL} 
q(\GRPH,L) \, L \,,
\eeq
where $Q \in \UROP(n)$. We shall treat the isomorphism at the second equality in (\ref{UROP}) as an identification,
$\UROP (n) = \GRF^{\Dgm(n) \times \VCL}$.

We call the elements of $\UROP(n)$ \emph{contraction maps}. This is motivated by the fact that they can be thought of as prescriptions for contracting subgraphs as we shall describe below.

Note that the action of the permutation group $\PERM_n$ on $\Dgm(n)$ induces an action on $\UROP(n)$.
We shall endow now the so-defined \sm\ $\UROP$ $=$ $\{\UROP(n)\}_{n \, \geqslant \, 1}$ with a structure of a symmetric operad. 

To this end we shall define the partial composition maps:
\beq\label{circ-i0}
\circ_i : \UROP(n) \otimes \UROP(j) \to \UROP(n-1+j) \,,
\eeq
$i=1,\dots,n$, $j=1,2,\dots$.
Let us introduce for every enumerated diagram $\GRPH$ the subsets of vertices $\JSE := \JSE(i,j) \subseteq \VRT(\GRPH)$: 
\beq\label{JVRT}
\JSE \ (\, \equiv J(i,j) ) := \,
\bigl\{\enu^{-1}(\ell) \, \bigl| \, \ell=i+1,\dots,i+j \bigr\} \,.
\eeq
We define for $Q'' \in \UROP(n)$, $Q' \in \UROP(j)$ and $\GRPH$ that is a representative of an isomorphism class in $\Dgm(n-1+j)$:
\beq\label{circ-i}
(Q'' \circ_i Q') (\GRPH) \, = \,
\mathop{\sum}\limits_{L \, \in \, \VCL} 
q'(\GRPH_J,L) \, Q''\bigl(\GRPH/(\JSE,L)\bigr) \,,
\eeq
where
\beq\label{QP}
Q'(\GRPH_{\JSE}) \, =: \, 
\mathop{\sum}\limits_{L \, \in \, \VCL} 
q'(\GRPH_{\JSE},L) \, L \,.
\eeq
Note that if we set
\beqa\label{QDP}
Q''(\GRPH'') \, = \podr
\mathop{\sum}\limits_{L \, \in \, \VCL} 
q''(\GRPH'',L) \, L 
\,, 
\quad 
\nn
Q (\GRPH) \, = \podr (Q''\circ_i Q') (\GRPH)
= \mathop{\sum}\limits_{L \, \in \, \VCL} 
q(\GRPH,L) \, L
,\quad
\eeqa
then Eq. (\ref{circ-i}) reads
\beq\label{circ-i1}
q(\GRPH,K) \, = \,
\mathop{\sum}\limits_{L \, \in \, \VCL} 
q'(\GRPH_J,L) \, q''\bigl(\GRPH/(\JSE,L),K\bigr) \,.
\eeq

\medskip

\begin{proposition}\label{Pro4.1}
{\rm (\cite{LN12})}
$\UROP$ $=$ $\{\UROP(n)\}_{n \, \geqslant \, 1}$ is a symmetric operad.
\end{proposition}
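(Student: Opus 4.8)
The plan is to verify directly that the partial compositions $\circ_i$ of (\ref{circ-i}), together with the $\PERM_n$-action on $\UROP(n)$ induced from the action on $\Dgm(n)$, satisfy the axioms of a symmetric operad. The guiding principle is that each composition is assembled from exactly two combinatorial operations on diagrams, restriction $\GRPH \mapsto \GRPH_J$ and colored contraction $\GRPH \mapsto \GRPH/(J,L)$, so every operad axiom should reduce to a compatibility identity between these two operations; once such identities are in hand, one substitutes the scalar form (\ref{circ-i1}), interchanges the (finite) sums over $\VCL$, and re-sums.

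First I would isolate the combinatorial lemmas that carry the whole argument. These are: (i) transitivity of restriction, $(\GRPH_{J})_{J'} \cong \GRPH_{J'}$ for $J' \subseteq J$; (ii) for disjoint $J_1, J_2 \subseteq \VRT(\GRPH)$, the colored contractions commute, $(\GRPH/(J_1,L_1))/(J_2,L_2) \cong (\GRPH/(J_2,L_2))/(J_1,L_1)$, and restriction is unaffected, $(\GRPH/(J_1,L_1))_{J_2} \cong \GRPH_{J_2}$; and (iii) for nested $J_1 \subseteq J_2$, transitivity of contraction $(\GRPH/(J_1,L_1))/(\overline{J_2},L_2) \cong \GRPH/(J_2,L_2)$ (with $\overline{J_2}$ the image of $J_2$ after the first contraction), together with the mixed identity $(\GRPH/(J_1,L_1))_{\overline{J_2}} \cong (\GRPH_{J_2})/(J_1,L_1)$. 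Each of these is proved by comparing the defining data $\MPO,\MPS,\vcl,\fcl$ and the induced enumerations on both sides, using the characterization that $\FLG(\GRPH/J)$ is precisely $\FLG(\GRPH)$ with the inner edges of $\GRPH_J$ deleted.

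Next I would dispose of the unit and equivariance. For the unit I take $\id \in \UROP(1)$ to be the \emph{color-reading} map, $q_{\id}(\GRPH,L) = \delta_{L,\vcl(v)}$ for the unique vertex $v$ of a one-vertex diagram $\GRPH \in \Dgm(1)$; since contracting a single vertex deletes no edge (there being no tadpole at it), $\GRPH/(J,L)$ is simply $\GRPH$ with $v$ recolored by $L$, and feeding this into (\ref{circ-i}) collapses the sum over $\VCL$ to give $Q \circ_i \id = Q$ and $\id \circ_1 Q = Q$. Equivariance follows from the fact that both $\GRPH_J$ and $\GRPH/(J,L)$ are natural with respect to isomorphisms of enumerated colored graphs, hence with respect to the relabeling by $\sigma \in \PERM_n$; tracking how the monotone induced enumerations transform produces exactly the block permutation $\sigma \circ_i \tau$, which is what the two equivariance axioms demand.

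Finally, associativity: substituting (\ref{circ-i1}) twice and interchanging the two sums over $\VCL$, the sequential associativity axiom reduces to lemma (iii) (the intermediate subgraph is nested inside the outer one) and the parallel associativity axiom reduces to lemma (ii) (the two subgraphs are disjoint). I expect the main obstacle to be precisely this associativity bookkeeping, not the abstract identities but the \emph{re-enumeration}: after a contraction the vertices are relabeled by the monotone bijection of the type (\ref{JVRT}) used to define $J(i,j)$, and one must check that the index set chosen at the second composition step corresponds, under the first contraction, to the correct vertices of $\GRPH$, so that the operadic index shifts (the $i+k-1$ of the sequential axiom and the shift by $|J|-1$ of the parallel axiom) come out exactly. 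Once lemmas (i)--(iii) are established and the enumerations are matched, what remains is a routine reindexing of the resulting double sum.
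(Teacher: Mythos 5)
Your overall strategy coincides with the paper's: Proposition \ref{Pro4.1} is stated there with the remark that the proof is a straightforward check, which is then omitted, and your proposal is the correct expansion of that check. In particular, reducing the two associativity axioms to compatibility lemmas between restriction and contraction --- transitivity of restriction, commutation of disjoint contractions with each other and with restriction, and transitivity of nested contractions together with the mixed identity $(\GRPH/(J_1,L_1))_{\overline{J_2}} \cong (\GRPH_{J_2})/(J_1,L_1)$ --- is exactly the right skeleton: after substituting (\ref{circ-i1}) twice and interchanging the finite sums over $\VCL$, the nested axiom needs your lemmas (i) and (iii) and the parallel axiom needs lemma (ii). Your identification of the re-enumeration bookkeeping (the monotone bijection sending $v_J$ to $\min \enu(J)$) as the only delicate point is also accurate; note in passing that (\ref{JVRT}) as printed has an off-by-one ($\ell$ should run over $i,\dots,i+j-1$), which your index shifts implicitly correct.

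The one genuine gap is in the unit axiom. You take $\id \in \UROP(1)$ to be the colour-reading map and assert that contracting a single vertex ``deletes no edge (there being no tadpole at it).'' But $\Dgm(n)$ as defined in (\ref{Dgm-def}) is the set of \emph{all} enumerated coloured graphs, and the paper keeps tadpole diagrams in it (Corollary \ref{Pro4.2} explicitly singles out the contraction maps vanishing on them). If $\GRPH$ has a tadpole at the vertex $v$ being replaced, that tadpole is an inner edge of $\GRPH_{\{v\}}$ and is therefore deleted from $\FLG(\GRPH/(\{v\},L))$ by the definition of contraction, so $Q \circ_i \id \neq Q$ on such diagrams; moreover no other choice of unit can repair this, since $\GRPH/(\{v\},L)$ never carries a tadpole at the new vertex while $Q$ is arbitrary. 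So either $\Dgm(n)$ must be restricted to tadpole-free diagrams for the unital operad structure to hold (harmless for the suboperads $\UROP_{\Adm,\TVER}$ used later, which kill tadpole diagrams anyway), or $\UROP$ as literally defined is only a non-unital symmetric operad. Your proof should impose this restriction explicitly rather than assert the absence of tadpoles as a fact.
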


\medskip

The {\it proof} is straightforward checking and we omit it.

\section{Suboperads in $\UROP$ and concrete combinatorial models of Quantum Field Theory}\label{S4.3n}

In the previous section we have defined a universal operad $\UROP$ on decorated graphs, which can include, at the combinatorial level, any concrete model of Quantum Field Theory (QFT) provided that we have sufficiently many colors in $\VCL$ and $\FCL$.
So, the QFT models can be considered as particular suboperads of $\UROP$.
Describing these suboperads can be quite cumbersome in general and we shall do this in several steps.
At each step we shall impose certain restrictions on the contraction maps $Q \in \UROP(n)$.
These restrictions include, in particular, requirements that $Q$ should vanish on certain classes of diagrams that are ``not admissible for contraction''.

For instance, excluding tadpoles was a first example of such a restriction on diagrams. It was ``stable with respect to contractions and subdiagrams'' and hence, it defined a suboperad in $\UROP$.
More precisely, the statement is that the subspaces in $\UROP(n)$ for every $n=1,2,\dots$, which consist of those contraction maps that vanish on diagrams with tadpoles, form a suboperad.

Let us formulate the argument in a more general principle:

\medskip

\begin{proposition}\label{pro4x}
{\rm (\cite{LN12})}
Let $\SYS=\{\SYS(n)\}_{n \geqslant 1}$ be a system of subsets
$\SYS(n) \subseteq \Dgm(n)\times \VCL$ for $n=1,2,\dots$ and let us define
\beqa\label{RPHI}
\UROP_{\SYS}(n) \, = \podr
\, \GRF^{\SYS(n)} \, \subseteq \, \GRF^{\Dgm(n)\times \VCL} \, \equiv \, \UROP(n) \,,
\nn
\UROP_{\SYS}(n) \, \equiv \podr
\Bigl\{Q = \sum 
q \, L \in \UROP(n) \, \Bigl| \, q\bigl|_{(\Dgm(n) \times \VCL) \backslash \SYS(n)} \, = \, 0\Bigr\}
\,,
\eeqa
where we use the expansion (\ref{Frm}) 
and 
embeddings of type $\GRF^A \hookrightarrow \GRF^B$ for $A\subseteq B$, which are defined by $(x_a)_{a \,\in\, A} \mapsto (y_b)_{b \,\in\, B}$ such that $y_a = x_a$ for $a \in A$ and $y_b=0$ for $b \in B\backslash A$.

Then the following conditions are equivalent:
\begin{LIST}{33pt}
\item[$(i)$]
The system $\UROP_{\SYS} = \{\UROP_{\SYS}(n)\}_{n \geqslant 1}$ is a suboperad of $\UROP$.
\item[$(ii)$]
Each subset $\SYS(n)$ is $\PERM_n$-invariant and the system $\{\SYS(n)\}_{n \geqslant 1}$ has the property
\beqa\label{gnprin}
(\GRPH_J,L) \in \SYS(|J|)
\ \text{ and } \ 
(\GRPH/(J,L),K) \in \SYS(n-|J|+1)
\nn
\ \ \Rightarrow \ \  (\GRPH,K) \in \SYS(n) 
\eeqa
for every $\GRPH \in \Dgm(n)$, $J \subseteq \VRT(\GRPH)$ and $K,L \in \VCL$.
\end{LIST}
\end{proposition}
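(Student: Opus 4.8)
The plan is to exploit two facts: that $\UROP_{\SYS}(n)=\GRF^{\SYS(n)}$ is a \emph{coordinate} subspace of $\UROP(n)=\GRF^{\Dgm(n)\times\VCL}$, i.e. the linear span of a distinguished subset of basis vectors, and that the partial composition is given by the fully explicit coefficient formula (\ref{circ-i1}). Since the $\PERM_n$-action merely permutes the basis indexed by $\Dgm(n)\times\VCL$, a coordinate subspace is $\PERM_n$-stable precisely when its index set is $\PERM_n$-invariant; this settles the symmetric half of the statement in both directions and reduces the equivalence to relating closure under the maps $\circ_i$ to the implication (\ref{gnprin}). (The operadic unit lives in arity one and amounts to a single membership condition in $\SYS(1)$, which I treat as a routine side check and set aside in order to focus on the compositions, where the content lies.)

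For $(ii)\Rightarrow(i)$ I would fix $Q'\in\UROP_{\SYS}(j)$ and $Q''\in\UROP_{\SYS}(n-j+1)$ and show that $Q:=Q''\circ_i Q'$ lies in $\UROP_{\SYS}(n)$, i.e. that $q(\GRPH,K)=0$ whenever $(\GRPH,K)\notin\SYS(n)$. In formula (\ref{circ-i1}), taken with $J=J(i,j)$, each summand over $\VCL$ carries the two factors $q'(\GRPH_J,L)$ and $q''(\GRPH/(J,L),K)$. If both were nonzero for some $L$, the support conditions on $Q'$ and $Q''$ would yield $(\GRPH_J,L)\in\SYS(j)$ and $(\GRPH/(J,L),K)\in\SYS(n-j+1)$, whereupon the stability implication (\ref{gnprin}) would force $(\GRPH,K)\in\SYS(n)$, contradicting the choice of $(\GRPH,K)$. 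Hence every summand vanishes, $q(\GRPH,K)=0$, and $Q\in\UROP_{\SYS}(n)$. Combined with the $\PERM$-invariance of the index sets, this exhibits $\UROP_{\SYS}$ as a symmetric suboperad.

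For $(i)\Rightarrow(ii)$ the $\PERM_n$-invariance of each $\SYS(n)$ is the basis observation of the first paragraph. For the implication (\ref{gnprin}) I would argue by \emph{detecting a single coefficient}. Set $j:=|J|$ and suppose $(\GRPH_J,L)\in\SYS(j)$ and $(\GRPH/(J,L),K)\in\SYS(n-j+1)$. First relabel the vertices by a permutation carrying $J$ onto the standard block $J(i,j)$ of (\ref{JVRT}); by the invariance just established this alters neither the two hypotheses nor the desired conclusion. Now let $Q'$ be the basis vector supported at $(\GRPH_J,L)$ and $Q''$ the basis vector supported at $(\GRPH/(J,L),K)$; by the two hypotheses these belong to $\UROP_{\SYS}(j)$ and $\UROP_{\SYS}(n-j+1)$. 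As $\UROP_{\SYS}$ is assumed to be a suboperad, $Q:=Q''\circ_i Q'\in\UROP_{\SYS}(n)$. Evaluating (\ref{circ-i1}) at $(\GRPH,K)$, the factor $q'(\GRPH_J,\cdot)$ pins the summation color to $L$, leaving $q(\GRPH,K)=q''(\GRPH/(J,L),K)=1$. Since $Q$ is supported on $\SYS(n)$, this nonzero coefficient forces $(\GRPH,K)\in\SYS(n)$, which is exactly (\ref{gnprin}).

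The one genuinely delicate point is the bookkeeping in the reduction to $J=J(i,j)$: one must check that relabelling by the chosen permutation, taken order-preserving on $J$, sends $\GRPH_J$ and $\GRPH/(J,L)$ to precisely the subgraph and colored contraction entering the definition of $\circ_i$, with induced enumerations matching those fixed in Section~\ref{Se3}. Once this is secured, both directions reduce to the mechanical extraction of one coefficient from (\ref{circ-i1}), together with the elementary fact that membership in a coordinate subspace is tested basis vector by basis vector.
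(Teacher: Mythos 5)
The paper itself gives no proof of Proposition~\ref{pro4x} --- it is quoted from \cite{LN12}, which is ``in preparation'' --- so there is no argument of the authors' to compare yours against; I can only assess your proposal on its merits, and it is correct. Your two reductions are exactly the right ones: a coordinate subspace of $\GRF^{\Dgm(n)\times\VCL}$ is $\PERM_n$-stable iff its index set is invariant (since the action permutes the basis), and closure under $\circ_i$ is read off coefficientwise from (\ref{circ-i1}), where the implication (\ref{gnprin}) is precisely the statement that no summand can be nonzero at a point outside $\SYS(n)$; the converse by testing on the two indicator elements supported at $(\GRPH_J,L)$ and $(\GRPH/(J,L),K)$ is the standard way to detect a single structure constant, and the color sum collapses to $L$ as you say. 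Two small remarks. First, the bookkeeping you flag as delicate does close up: taking $i=\min\enu(J)-1$ and the unique permutation that is order-preserving on $J$ and on its complement sends the induced enumerations of $\GRPH_J$ and $\GRPH/(J,L)$ to exactly those entering (\ref{JVRT})--(\ref{circ-i1}); and even this care is optional, since once $\PERM$-invariance of the $\SYS(m)$ is established (which in the direction $(i)\Rightarrow(ii)$ it already is) membership in $\SYS$ is insensitive to relabelling. Second, your parenthetical about the operadic unit is the right instinct but should be made explicit: condition $(ii)$ does \emph{not} force the unit of $\UROP(1)$ into $\UROP_{\SYS}(1)$ (e.g.\ $\UROP_{\text{\rm 1PI}}(1)=0$ in Corollary~\ref{Crl4x1}), so ``suboperad'' in the proposition must be understood as closure under the partial compositions and the symmetric group action only; with that reading your proof is complete.
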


\begin{corollary}\label{Crl4x1}
The following systems form a suboperad in $\UROP$:
$$
\UROP_{\text{\rm 1PI}} (n) \, := \,
\bigl\{Q \in \UROP(n) \, \bigl| \, Q(\GRPH) = 0 \text{ if } \GRPH \text{ is \emph{not} 1PI 
} \bigr\}. 
$$
\end{corollary}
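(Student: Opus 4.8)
The plan is to deduce Corollary~\ref{Crl4x1} directly from Proposition~\ref{pro4x} by exhibiting the class of 1PI diagrams as a system $\SYS=\{\SYS(n)\}_{n\geqslant 1}$ satisfying condition $(ii)$, whence $(i)$ gives the suboperad structure. Concretely, I would set
\[
\SYS(n) \, := \, \bigl\{(\GRPH,K) \in \Dgm(n)\times\VCL \, \bigl| \, \GRPH \text{ is 1PI}\bigr\}
\]
(the color $K$ playing no role in the 1PI condition), observe that $\UROP_{\SYS}(n)$ is exactly the space $\UROP_{\text{\rm 1PI}}(n)$ defined in the statement, and then verify the two hypotheses of part $(ii)$.

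First I would check $\PERM_n$-invariance: the permutation action on $\Dgm(n)$ merely relabels the enumeration $\enu$ of vertices, and being 1PI is a property of the body of the graph (connectedness, plus connectedness after cutting any inner edge), which is invariant under isomorphism and in particular under re-enumeration. Hence each $\SYS(n)$ is $\PERM_n$-invariant. This step is routine.

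The substantive step is the implication~(\ref{gnprin}). I must show: if the subgraph $\GRPH_J$ is 1PI and the contracted graph $\GRPH/(J,L)$ is 1PI, then $\GRPH$ is 1PI. The hard part is the stability argument itself, and I expect it to be the main obstacle. I would argue as follows. Connectedness of $\GRPH$ follows because contracting the connected subgraph $\GRPH_J$ to a single vertex yields the connected graph $\GRPH/J$, and a graph whose contraction of a connected subgraph is connected must itself be connected. For the irreducibility, I would show that cutting any inner edge $e$ of $\GRPH$ leaves it connected, splitting into two cases. If $e$ is an inner edge of $\GRPH_J$ (both endpoints in $J$), then cutting it inside the 1PI graph $\GRPH_J$ keeps $\GRPH_J$ connected, and since the rest of $\GRPH$ attaches to $\GRPH_J$ unchanged, $\GRPH$ stays connected. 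If $e$ is not an inner edge of $\GRPH_J$, then $e$ survives in $\GRPH/J$ as an inner edge, and since $\GRPH/(J,L)$ is 1PI, cutting $e$ keeps $\GRPH/J$ connected; combined with the connectedness of $\GRPH_J$ (which was collapsed to the single vertex $v_J$), this lifts back to connectedness of $\GRPH$ after the cut. I would also note that the ``no tadpoles'' and ``at least two vertices'' side conditions built into the definition of 1PI are compatible: by the remark following the contraction construction, absence of tadpoles is preserved under both passing to $\GRPH_J$ and contracting, and the vertex-count condition follows since $\GRPH_J$ and $\GRPH/J$ each having at least two vertices forces $\GRPH$ to have at least two vertices.

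Once~(\ref{gnprin}) is established, Proposition~\ref{pro4x} yields that $\UROP_{\SYS}=\UROP_{\text{\rm 1PI}}$ is a suboperad, completing the proof. The only genuine care required is in the case analysis for the irreducibility lift; the connectedness and invariance checks are straightforward, and the color index $K$ passes through passively since the 1PI condition is purely graph-theoretic.
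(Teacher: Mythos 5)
Your proposal is correct and follows exactly the route the paper intends: Corollary~\ref{Crl4x1} is an application of Proposition~\ref{pro4x} with $\SYS(n)=\Dgm'(n)\times\VCL$ for $\Dgm'(n)$ the 1PI classes, and your verification of $\PERM_n$-invariance and of the implication~(\ref{gnprin}) via the two-case edge-cutting argument is the substance the paper leaves implicit. One tiny imprecision: for the tadpole condition you need the converse of the paper's remark (if $\GRPH_J$ and $\GRPH/J$ are tadpole-free then so is $\GRPH$), which is immediate since any tadpole of $\GRPH$ survives either in $\GRPH_J$ or in $\GRPH/J$ according to whether its vertex lies in $J$.
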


Let us give another example for a restriction on diagrams that induces a suboperad.
Let us consider a \emph{non-empty} subset
$$
\Adm \subset \FCL^{\times 2}
$$
and call it a set of \emph{admissible connections}. A colored graph $\GRPH$ is called $\Adm$--\emph{admissible} if for all flags $\flg \in \FLG(\GRPH)$ such that $\flg \neq \MPS(\flg)$ we have $(\fcl(\flg),\fcl(\MPS(\flg))) \in \Adm$. Or in other words, if the pairs of colors of the flags corresponding to the inner edges are contained in $\Adm$. 
As an application of Proposition \ref{pro4x} we get:

\begin{corollary}\label{Pro4.2}
Let $\Adm$ be any symmetric subset in $\FCL^{\times 2}$ and let $\UROP_{\Adm} (n)$ be the space that consists of all contraction maps $Q \in \UROP(n)$, which vanish on all diagrams that either are not $\Adm$--admissible, or have tadpoles. Then $\bigl\{\UROP_{\Adm} (n)\bigr\}_{n \, \geqslant \, 1}$ is a suboperad of $\UROP$.
\end{corollary}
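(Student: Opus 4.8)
The plan is to realize $\UROP_{\Adm}$ as an instance of the general construction $\UROP_{\SYS}$ of Proposition \ref{pro4x} and then verify the equivalent conditions stated there. To this end, let $D(n) \subseteq \Dgm(n)$ denote the set of (isomorphism classes of) enumerated colored graphs with $n$ vertices that are simultaneously $\Adm$-admissible and free of tadpoles, and put
\[
\SYS(n) \, := \, D(n) \times \VCL \, \subseteq \, \Dgm(n) \times \VCL \,.
\]
Since being $\Adm$-admissible and being tadpole-free are conditions on the diagram alone and not on the auxiliary color $L \in \VCL$, the space $\UROP_{\SYS}(n)$ of contraction maps supported on $\SYS(n)$ coincides exactly with the space $\UROP_{\Adm}(n)$ of maps that vanish on every diagram which is not $\Adm$-admissible or carries a tadpole. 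Thus it suffices to check condition $(ii)$ of Proposition \ref{pro4x} for this $\SYS$.

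The $\PERM_n$-invariance of $\SYS(n)$ is immediate: the action of $\PERM_n$ merely relabels the enumeration of the vertices, whereas both admissibility (a condition on the colors $\fcl(\flg), \fcl(\MPS(\flg))$ of the flags forming inner edges) and the absence of tadpoles are preserved by graph isomorphisms, hence by re-enumeration. It remains to verify the stability property (\ref{gnprin}). Fix $\GRPH \in \Dgm(n)$, a subgraph $J \subseteq \VRT(\GRPH)$, and colors $K, L \in \VCL$, and suppose $\GRPH_J \in D(|J|)$ and $\GRPH/(J,L) \in D(n-|J|+1)$; I must show $\GRPH \in D(n)$. The key observation is that every inner edge $\{\flg, \MPS(\flg)\}$ of $\GRPH$ falls into exactly one of three classes according to $J$: (a) both endpoints $\MPO(\flg), \MPO(\MPS(\flg)) \in J$; (b) both endpoints lie outside $J$; (c) exactly one endpoint lies in $J$. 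By the definitions of subgraph and contraction, the edges of class (a) are precisely the inner edges of $\GRPH_J$, while those of classes (b) and (c) are precisely the inner edges of $\GRPH/J$ (the latter incident to the new vertex $v_J$); moreover the flag coloring $\fcl$ on $\GRPH_J$ and on $\GRPH/J$ is just the restriction of that on $\GRPH$.

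Admissibility of $\GRPH$ then follows edge by edge: each inner edge of $\GRPH$ is an inner edge of either $\GRPH_J$ or $\GRPH/J$ whose flags carry the same colors, so the corresponding pair of colors lies in $\Adm$ because both $\GRPH_J$ and $\GRPH/(J,L)$ are $\Adm$-admissible (the vertex colors $L$ of $v_J$ and $K$ play no role, since admissibility concerns flag colors only). For tadpole-freeness, a class (c) edge joins a vertex of $J$ to a vertex outside $J$, hence two distinct vertices, so it is never a tadpole in $\GRPH$; a class (a) tadpole would be a tadpole of $\GRPH_J$, and a class (b) tadpole would be a tadpole of $\GRPH/J$. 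Here lies the only delicate point, and the step I expect to be the main obstacle: one must ensure that the contraction creates no spurious tadpole at $v_J$. This holds precisely because the contraction removes all inner edges of $\GRPH_J$, so every surviving edge incident to $v_J$ has its other endpoint outside $J$ and is therefore not a tadpole. Consequently the tadpole-freeness of $\GRPH_J$ and of $\GRPH/(J,L)$ forces $\GRPH$ to be tadpole-free, whence $\GRPH \in D(n)$, i.e. $(\GRPH, K) \in \SYS(n)$. This establishes (\ref{gnprin}), and Proposition \ref{pro4x} yields that $\UROP_{\Adm}$ is a suboperad of $\UROP$.
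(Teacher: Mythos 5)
Your proof is correct and follows exactly the route the paper intends: it realizes $\UROP_{\Adm}$ as $\UROP_{\SYS}$ with $\SYS(n)=\Dgm'(n)\times\VCL$ for $\Dgm'(n)$ the admissible tadpole-free diagrams, and verifies condition $(ii)$ of Proposition \ref{pro4x} by sorting the inner edges of $\GRPH$ into those of $\GRPH_J$ and those of $\GRPH/J$. The paper leaves this verification implicit (it only records the simplified form of (\ref{gnprin}) when $\SYS(n)$ does not depend on the color $L$), so your write-up simply supplies the omitted details.
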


Note that in Corollaries \ref{Crl4x1} and \ref{Pro4.2} the sets $\SYS(n)$ are of the form
$$
\SYS(n) = \Dgm'(n) \times \VCL
$$ 
for some subsets
$\Dgm'(n) \subseteq \Dgm(n)$.
In this case condition (\ref{gnprin}) reads
$$
\GRPH_J \in \Dgm'(|J|)
\ \text{ and } \ 
\GRPH/J \in \Dgm'(n-|J|+1)
\nn
\ \ \Rightarrow \ \  \GRPH \in \Dgm'(n)  \,.
$$
and $\UROP_{\SYS}$ is
\beqs
\UROP_{\SYS}(n) = \podr \bigl\{Q \in \UROP(n) \, \bigl| \, Q\bigl|_{\Dgm(n) \backslash \Dgm'(n)} = 0\bigr\}
\equiv \Hom_{\GRF} \Bigl(\GRF^{(\Dgm'(n))},\GRF^{\VCL}\Bigr) 
\\
= \podr \GRF^{\Dgm'(n) \times \VCL} \,.
\eeqs

\begin{example}\label{exm04}
Let us introduce an example of the set $\Adm$ for the case of Quantum Electrodynamics (QED).
In this case we use three colors for flags
$\FCL=$ $\bigl\{$\raisebox{-3pt}{\includegraphics[scale=0.25]{JLL-NMN_fig05.pdf}}$\bigr\}$
The set of admissible connections is:
$$
\Adm \, = \,
\Bigl\{ 
\bigl(\ \raisebox{-3pt}{\includegraphics[scale=0.25]{JLL-NMN_fig05a.pdf}}\ ,\ \raisebox{-3pt}{\includegraphics[scale=0.25]{JLL-NMN_fig05a.pdf}}\ \bigr),
\bigl(\ \raisebox{-3pt}{\includegraphics[scale=0.25]{JLL-NMN_fig05b.pdf}}\ ,\ \raisebox{-3pt}{\includegraphics[scale=0.25]{JLL-NMN_fig05c.pdf}}\ \bigr)
\bigl(\ \raisebox{-3pt}{\includegraphics[scale=0.25]{JLL-NMN_fig05c.pdf}}\ ,\ \raisebox{-3pt}{\includegraphics[scale=0.25]{JLL-NMN_fig05b.pdf}}\ \bigr)
\Bigr\} \,.
$$
The diagram of Example \ref{exm02} was thus $\Adm$--admissible for QED and as there we can use for edges single colors, one with no orientation and one with orientation. The non-oriented lines are called ``photon lines'' and the oriented lines are called ``electron lines''.
\end{example}

\medskip 

Our next ``selection rule'' for contraction maps is by the type of vertices.
A vertex is a colored graph with one vertex and no tadpoles. So, it contains only outer edges which are called \emph{corolla} of the vertex. 
The number of the external edges of the vertex is called its \emph{valency}.

Let $\TVER \subseteq \Dgm(1)$ be a set of vertices. We call the set $\TVER$ \emph{types of vertices} in the theory.
Let us define then the system $\SYS_{\TVER} = \{\SYS_{\TVER} (n)\}_{n \geqslant 1}$
\beqs
\SYS_{\TVER} (n) \, = \, 
\Bigl\{(\GRPH,L) \in \Dgm(n) \times \VCL \,\Bigl|\podr \forall J \subseteq \GRPH \, \bigl(\text{if } |J|=1 \text{ then } \GRPH_J \in \TVER\bigr) \text{ and }  
\\ \podr
\GRPH/(\VRT(\GRPH),L) \in \TVER\Bigr\} \,.
\eeqs
It follows that $\SYS_{\TVER}$ satisfies condition $(ii)$ of Proposition \ref{pro4x} and hence,
$$
\UROP_{\TVER} \, := \, \UROP_{\SYS_{\TVER}} \,,
$$
is a suboperad of $\UROP$.

Thus, a physical theory can be defined as intersection of the operads
\beq\label{RAV}
\UROP_{\Adm,\TVER} \, := \,
\UROP_{\text{\rm 1PI}} \cap \UROP_{\Adm} \cap \UROP_{\TVER} \,.
\eeq
In the next section we shall consider the main examples of physical theories.

\medskip

\begin{remark}\label{Rm4xx}
If $\{\SYS_{i}\}_{i \in I}$ is a collection of systems $\SYS_{i} = \{\SYS_{i} (n)\}_{n \geqslant 1}$ each satisfying condition $(ii)$ of Proposition \ref{pro4x} then 
$$
\mathop{\bigcap}\limits_{i \in I}
\UROP_{\SYS_{i}} \, = \, 
\UROP_{\SYS}
\quad \text{where} \quad
\SYS = \{\SYS(n)\}_{n \geqslant 1} \quad \text{with} \quad
\SYS(n) =
\mathop{\bigcap}\limits_{i \in I} \SYS_{i}(n) \,,
$$
and $\SYS$ also satisfies condition $(ii)$ of Proposition \ref{pro4x}.
\end{remark}
 
\section{The group related to the contraction operad and its representation in the group of formal diffeomorphisms on the space of interactions}\label{Se-X}

Having defined a symmetric operad $\UROP$ for each particular QFT model we have automatically a group $\GRP(\UROP)$ associated to it.
This group is precisely the operadic construction of the renormalization group.

\subsection{Notions of renormalization group}\label{Se-X1}

There are several widespread notions of renormalization group in physics and they do not lead to equal objects although they are closely related to each other.
We shall review below some of them.
For recent related works we refer the reader to \cite{S}, \cite{BDF}.

In renormalization theory a physical quantity $\PHQ$ (an observable for instance, or a correlation function in QFT) is derived as a function $\PHQ=\PHQ(\COPCON_1,\dots,\COPCON_N;\REGPAR)$ ($\equiv \PHQ(\VCOPCON;\REGPAR)$) of various parameters including: 
\begin{LIST}{33pt}
\item[$\bullet$]
physical constants $\COPCON_1,\dots,\COPCON_N$. In QFT these are called coupling constants.
\item[$\bullet$]
An additional subsidiary parameter $\REGPAR > 0$ called a regularization parameter. It makes meaningful the value of $\PHQ(\COPCON_1,\dots,\COPCON_N;\REGPAR)$ that is usually ill-defined for $\REGPAR\to 0$.
The latter limit corresponds exactly to the actual physical value of $\PHQ$ and the purpose of the renormalization is to understand how to do it.
\item[$\bullet$]
There might be further variables but we consider them as a ``part'' of $\PHQ$
(so that $\PHQ$ is then valued in some vector or function space).
\end{LIST}
Furthermore, in perturbation theory, one has defined $\PHQ$ only as a formal power series in the coupling constants
\beq\label{PHQexp}
\PHQ (\VCOPCON;\REGPAR) \, = \, 
\mathop{\sum}\limits_{n \, = \, 0}^{\infty}
\frac{1}{n!} \,
\mathop{\sum}\limits_{i_1,\dots,i_n \, = \, 1}^{N}
\PHQ_{i_1,\dots,i_n}(\REGPAR) \,
\COPCON_{i_1} \cdots \COPCON_{i_n} \,,
\eeq
with coefficients $\PHQ_{i_1,\dots,i_n}(\REGPAR)$ that are functions in $\REGPAR > 0$.
The renormalization issue now is to find such a change of the physical parameters:
\beq\label{CHNG}
\VCOPCON' \, = \, \VCHNG(\VCOPCON;\REGPAR)
\,,\qquad
\COPCON'_{i} \, = \, \mathop{\sum}\limits_{n \, = \, 1}^{\infty}
\frac{1}{n!} \,
\mathop{\sum}\limits_{i_1,\dots,i_n \, = \, 1}^{N}
\CHNG_{i;i_1,\dots,i_n}(\REGPAR) \,
\COPCON_{i_1} \cdots \COPCON_{i_n} \,,
\eeq
again as a formal power series,
so that after the substitution\footnote{%
in terms of formal power series; note that the series $\VCHNG(\VCOPCON;\REGPAR)$ starts from $n=1$ but for $\PHQ(\VCOPCON;\REGPAR)$ we do not have such a restriction}
\beq\label{PHQchng}
\RPHQ(\VCOPCON;\REGPAR) \, := \, \PHQ \bigl(\VCHNG(\VCOPCON;\REGPAR);\REGPAR\bigr)
\, = \,
\mathop{\sum}\limits_{n \, = \, 0}^{\infty}
\frac{1}{n!} \,
\mathop{\sum}\limits_{i_1,\dots,i_n \, = \, 1}^{N}
\RPHQ_{i_1,\dots,i_n}(\REGPAR) \,
\COPCON_{i_1} \cdots \COPCON_{i_n} \,,
\eeq
the resulting coefficients $\RPHQ_{i_1,\dots,i_n}(\REGPAR)$ would have a finite limit for $\REGPAR \to 0$.
We set the final renormalized physical quantity $\RPHQ$ to be
\beq\label{PHQren}
\RPHQ (\VCOPCON) \, := \, \mathop{\lim}\limits_{\REGPAR \to 0} \, \RPHQ(\VCOPCON;\REGPAR) \,.
\eeq
The existence of such a formal diffeomorphism $\VCOPCON' = \VCHNG(\VCOPCON;\REGPAR)$ (\ref{CHNG}) for a given in advance series $\PHQ(\VCOPCON;\REGPAR)$ (\ref{PHQexp}) so that the limit (\ref{PHQren}) exists is far from being a trivial statement.
This phenomena is called {\it renormalizability} of $\PHQ$.
The physical interpretation of this procedure is that we pass by the change (\ref{CHNG}) to a new set of coupling constants called ``renormalized couplings'' so that the initial ``bare couplings'' become infinite (meaningless) for $\REGPAR \to 0$.

Still, the above renormalization procedure has a built in ambiguity. Namely, if we have one solution $\VCHNG(\VCOPCON;\REGPAR)$ (\ref{CHNG}) of this problem then any composition
$$
\VCHNG_1(\VCOPCON;\REGPAR) \, = \,  \VCHNG\bigl(\VFINREN(\VCOPCON);\REGPAR\bigr)
$$
with a formal diffeomorphism $\VFINREN(\VCOPCON)$
will also be a solution.
Thus, the group of formal diffeomorphisms of the couplings $\VCOPCON$ appears naturally as acting on the renormalization schemes.
This is the first notion of a renormalization group.
It is simply the group of formal diffeomorphism.

We see that the above concept of renormalization is rather general.
It leads also to the most primary concept of a renormalization group and so, it should be related to any other such notion.
More precisely, any other notion of a renormalization group should have a representation (a homomorphism) in the group of formal diffeomorphisms of the coupling constants.
In this case we speak about ``renormalization group action'', i.e., it is an action of the corresponding group by formal diffeomorphisms of the couplings.

\medskip

We pass now to a second notion of the renormalization group that is specific for QFT and it is finer than the above one.
In QFT there are additional technical features of the renormlization procedure.
Namely, each of the terms $\PHQ_{i_1,\dots,i_n}(\REGPAR)$ in series (\ref{PHQexp}) is additionally expanded in a finite sum labeled by a Feynman graph with $n$ vertices.
The renormalization adds to every diagram contribution a counter-term together with recursively determined counter-terms for subdiagrams.
Without going more into the details we will only mention that the ambiguity in the renormalization in QFT is described exactly by contraction maps introduced in Sect. \ref{S4.2n}.
So, we obtain now a finer notion of renormalization group that is formed by sequences of contraction maps.
One further shows that the composition in this group is exactly given by the rule following from the operadic structure on contraction maps.
The latter is shown in \cite[Sect. 2.6]{N1} in a more general context of renormalization than the graph-combinatorial one.

Thus, from this second perspective the renormalization group appears exactly as a group related to the contraction operad on Feynman diagrams.
Then, as explained above, there should be related a ``renormalization group action'', i.e., a homomorphism from this group to the group of formal diffeomorphisms of the couplings.
The existence and the derivation of this homomorphism follow also from the general renormalization theory and are not a part of the present work.
However, our result is that the resulting homomorphism corresponds to an operadic morphism via the functor established in Theorem \ref{Th1.1}.
Let us summarize all this:

\medskip

{\it
There is an operadic morphism, $\RMOR: \UROP_{\Adm,\TVER} \to \OEND_{\ \R^{\hspace{-1pt}\TVER}}$, from the contraction operad to the operad $\OEND_{\ \R^{\hspace{-1pt}\TVER}}$ over the vector space spanned by the set of type of vertices $\TVER$. The latter set indexes the set of coupling constants in the QFT model that is determined by the combinatorial data $(\Adm,\TVER)$. The induced map between the related groups
\beq\label{RRMOR}
\RRMOR \, : \, \GRP\bigl(\UROP_{\Adm,\TVER}\bigr) \to \GRP\bigl(\OEND_{\ \R^{\hspace{-1pt}\TVER}}\bigr)
\, \cong \
\FDIFF \bigl(\R^{\hspace{-1pt}\TVER}\bigr)
\,
\eeq
coincides with the renormalization group action determined from the renormalization theory.}

\medskip

In the subsequent subsections we will construct the morphism
$\RMOR: \UROP_{\Adm,\TVER} \to \OEND_{\ \R^{\hspace{-1pt}\TVER}}$.
We shall continue our considerations on a general ground field (ring) $\GRF$ but the above application uses the case $\GRF=\R$.

\subsection{Bosons and fermions}\label{Ss5.2xx}

We introduce a subdivision of the set of fields, i.e. the set $\FCL$ of flags' colors, into two disjoint subsets called bosons and fermions. 
According to this we assign $(\Z/2\Z)$--parities to the set of generators (\ref{NSGEN}) of the monoid $\MND(n)$.
For a bosonic $\phi$ the element $\phi(i)$ is even and for fermionic $\phi$, $\phi(i)$ is odd. 
The parity of the propagator $\CVR{\phi}{i}{\psi}{j}$ is the sum of the parities of the coupled fields $\phi$ and $\psi$. Usually bosons are coupled only to bosons and fermions - to fermions, so that the propagators are then always even. 
Finally, the interactions $L(i)$ are even as well.

Recall that we introduced in Sect. \ref{Se3} g a canonical isomorphism $\Dgm(n) \cong \MND(n)$ between the set $\Dgm(n)$ of all classes of isomorphic enumerated colored diagrams with $n$ vertices and the elements in the free monoid $\MND(n)$ generated by the set (\ref{NSGEN}).
Let us introduce the linear envelope of the monoid $\MND(n)$:
\beq\label{Mnd-1}
\Mnd(n) \, := \, \GRF^{(\MND(n))} \, \cong \, \GRF^{(\Dgm(n))} \,,
\eeq
which is thus an algebra.\footnote{%
However, we remark that the algebra structure induced by the monoid structure of $\MND(n)$ is quite different from the algebra structure on the space of diagrams that is usually used in the Connes--Kreimer approach.}
In the more general case of presence of fermions we redefine the algebra structure on $\Mnd(n)$ (\ref{Mnd-1}) and set
\beq\label{Mnd-2}
\Mnd(n) \, := \, \text{the graded commutative algebra generated by the set (\ref{NSGEN}).}
\eeq

Note that in all the constructions up to now the division of the fields (i.e., the set $\FCL$) into bosons and fermions is inessential.

\subsection{The Wick generating operator of diagrams}\label{S4.6n} 

Let us assume first that we have a theory only with bosons so that the algebras $\Mnd(n)$ are commutative. 

Let us have $n$ vertices $\ver_1,\dots,\ver_n \in \TVER$ and consider them as one enumerated colored graph that is completely disconnected (i.e., it has no inner lines). The monomial in $\MND(n)$ corresponding to this diagram is thus $\ver_1(1) \cdots \ver_n(n) \equiv\ver_1 \otimes \cdots \otimes \ver_n$, where the number in bracket ``$(j)$'' indicates the number assigned to the corresponding vertex. Denote
\beqa\label{WIK0}
\Wck{\Adm}{n}(\ver_1,\dots,\ver_n) \, := \podr
\sum \text{all possible ways of connecting the vertices}  
\nnb \podr
\ver_1(1),\dots,\ver_n(n) \text{ into $\Adm$--admissible enumerated colored}
\nnb \podr
\text{graphs with no tadpoles}
\nnb = \podr \ver_1(1) \cdots \ver_n(n) + \cdots \,,
\eeqa
where $\Adm \subseteq \FCL^{\times 2}$ is a set of admissible connections as defined in Sect. \ref{S4.3n}.
This defines us a multilinear map
$$
\Wck{\Adm}{n} \, : \,
\bigl(\GRF^{\hspace{-1pt}\TVER}\bigr)^{\times n} \to \Mnd(n) \,.
$$

\begin{proposition}\label{Pro4.6}
{\rm (\cite{LN12})}
Under the isomorphism  $\Dgm(n) \cong \MND(n)$ $($Proposition \ref{Pro4.3}$)$ the following equation holds
\beqa\label{WIK}
\podr
\Wck{\Adm}{n}(\ver_1,\dots,\ver_n) 
\nnb
\podr = \,
\Biggl[
\mathop{\prod}\limits_{1 \, \leqslant \, i \, < \, j \, \leqslant \, n}
\exp 
\Biggl(
\mathop{\sum}\limits_{(\phi,\psi) \in \Adm} 
\CVR{\phi}{i}{\psi}{j} \frac{\di^2}{\di \phi(i) \di \psi (j)}
\Biggr) 
\Biggr] \,
\ver_1(1) \cdots \ver_n(n)
\,.
\eeqa
\end{proposition}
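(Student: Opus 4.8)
The statement is Wick's theorem in disguise, and the plan is to compare, via the identification $\Dgm(n)\cong\MND(n)$ of Proposition \ref{Pro4.3}, the coefficient of each monomial (equivalently, each graph) on the two sides of (\ref{WIK}). First I would record the two structural facts that make the right-hand side tractable. Since $\Mnd(n)$ is the free (graded-)commutative algebra on the generators (\ref{NSGEN}), the propagator variables $\CVR{\phi}{i}{\psi}{j}$, the field variables $\phi(i)$ and the interaction variables $L(i)$ are algebraically independent; in particular multiplication by a propagator commutes with every derivative $\di/\di\phi(i)$, and the interaction variables are inert under these derivatives and simply pass through. Consequently all the elementary operators $\CVR{\phi}{i}{\psi}{j}\,\di^2/\di\phi(i)\di\psi(j)$ (over $i<j$ and $(\phi,\psi)\in\Adm$) pairwise commute, so the product of exponentials in (\ref{WIK}) collapses to a single exponential of their sum, which I may expand multiplicatively.

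Next I would expand this single exponential and evaluate it on the monomial $\ver_1(1)\cdots\ver_n(n)$. Writing $a_i^\phi$ for the number of $\phi$-flags in the corolla of $\ver_i$, a term of the expansion is indexed by nonnegative integers $k_{ij}^{\phi\psi}$ (one per ordered pair $i<j$ and admissible colour pair), carrying the scalar $\prod 1/k_{ij}^{\phi\psi}!$ from the exponential and the factor $\prod(\CVR{\phi}{i}{\psi}{j})^{k_{ij}^{\phi\psi}}$ from the propagator multiplications. Collecting derivatives by variable, the total power of $\di/\di\phi(i)$ is $d_i^\phi := \sum_{j>i}\sum_\psi k_{ij}^{\phi\psi}+\sum_{j<i}\sum_\chi k_{ji}^{\chi\phi}$, and $(\di/\di\phi(i))^{d_i^\phi}$ applied to $\phi(i)^{a_i^\phi}$ yields the falling factorial $a_i^\phi!/(a_i^\phi-d_i^\phi)!$ (and $0$ once $d_i^\phi>a_i^\phi$). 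Hence the right-hand side is the sum over all profiles $\{k_{ij}^{\phi\psi}\}$ of
$$\Bigl[\prod_{i,\phi}\frac{a_i^\phi!}{(a_i^\phi-d_i^\phi)!}\Bigr]\Bigl[\prod_{i<j,\,\phi,\psi}\frac{1}{k_{ij}^{\phi\psi}!}\Bigr]$$
times the monomial of the graph $\GRPH$ with exactly $k_{ij}^{\phi\psi}$ inner edges joining a $\phi$-flag at $i$ to a $\psi$-flag at $j$ and the remaining $a_i^\phi-d_i^\phi$ flags external. Because only $i<j$ and only admissible colour pairs occur, each such $\GRPH$ is automatically tadpole-free and $\Adm$-admissible, and conversely every such graph arises from a unique profile; this matches exactly the range of the Wick sum (\ref{WIK0}), whose leading ($k\equiv 0$) term is $\ver_1(1)\cdots\ver_n(n)$.

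It then remains to identify the displayed coefficient with the coefficient of the same graph on the left, which by definition is the number $W(\{k_{ij}^{\phi\psi}\})$ of distinct ways of joining the distinguishable flags of $\ver_1(1),\dots,\ver_n(n)$ into a graph realizing that profile. I would prove the identity $W=\bigl[\prod_{i,\phi}a_i^\phi!/(a_i^\phi-d_i^\phi)!\bigr]\big/\bigl[\prod k_{ij}^{\phi\psi}!\bigr]$ by first counting \emph{labelled} joinings: at each pair $(i,\phi)$ the $d_i^\phi$ edge-endpoints incident to the $\phi$-corolla of $\ver_i$ are injectively assigned to its $a_i^\phi$ flags, giving $a_i^\phi!/(a_i^\phi-d_i^\phi)!$ choices, where injectivity encodes that distinct edges use distinct flags. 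The forgetful map from labelled joinings to graphs is exactly $\prod_{i<j,\phi,\psi}k_{ij}^{\phi\psi}!$-to-one, since permuting the $k_{ij}^{\phi\psi}$ mutually parallel edges of a fixed type leaves the graph unchanged, and (because $i<j$ excludes tadpoles and all endpoint-flags are distinct) these are the only coincidences. Dividing gives the claimed coefficient, completing the termwise comparison.

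I expect this final combinatorial identity — reconciling the $1/k!$ from the exponential and the falling factorials from the repeated derivatives with the flag-matching count — to be the only genuine obstacle; everything else is formal manipulation in a polynomial algebra. A fully equivalent route is induction on the number of inner edges, showing that one application of $\sum_{i<j}\sum_{(\phi,\psi)\in\Adm}\CVR{\phi}{i}{\psi}{j}\,\di^2/\di\phi(i)\di\psi(j)$ adds a single admissible edge with the correct multiplicity; this merely repackages the same counting.
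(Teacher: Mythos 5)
The paper itself gives no proof of Proposition~\ref{Pro4.6}: it is attributed to \cite{LN12} (listed as ``in preparation'') and no argument appears in the text, so there is nothing to compare your route against. Judged on its own, your proof is correct and is the standard Wick-theorem computation one would expect. The two pillars are sound: (i) in the commutative (purely bosonic) case in which the proposition is stated, the operators $\CVR{\phi}{i}{\psi}{j}\,\di^2/\di\phi(i)\di\psi(j)$ pairwise commute because propagators, fields and interactions are independent generators of the free commutative algebra $\Mnd(n)$, so the product of exponentials expands into a sum over edge profiles $\{k_{ij}^{\phi\psi}\}$ with coefficient $\prod 1/k_{ij}^{\phi\psi}!$ times falling factorials; (ii) the matching count $W=\bigl[\prod_{i,\phi}a_i^\phi!/(a_i^\phi-d_i^\phi)!\bigr]/\prod k_{ij}^{\phi\psi}!$ is exactly the number of distinct flag-joinings realizing a given isomorphism class, and your labelled-versus-unlabelled argument (free action of $\prod \Sy_{k_{ij}^{\phi\psi}}$ on labelled joinings because parallel edges end on distinct flags and $i<j$ excludes tadpoles) is the right justification. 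You also correctly observe that a profile determines the isomorphism class of the enumerated colored graph, hence the monomial in $\MND(n)$, which is what makes the termwise comparison legitimate.

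Two small points worth making explicit if this were written out in full: the orientation convention matters --- the generators $\CVR{\phi}{i}{\psi}{j}$ are ordered pairs, and the identification with $\MON_\GRPH$ implicitly uses the convention $i<j$ together with the symmetry of $\Adm$, which is why summing over all $(\phi,\psi)\in\Adm$ with $i<j$ covers every admissible edge exactly once; and the statement as formulated lives in the bosonic subsection, so your parenthetical ``graded-commutative'' should not be pressed --- in the fermionic case the paper itself only claims the identity up to signs, and a careful sign bookkeeping (left Grassmann derivatives, ordering of generators) would be an additional piece of work not covered by your argument.
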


In the presence of fermions Eq. (\ref{WIK}) continues to generate the terms in the right hand side of Eq. (\ref{WIK0}) but with some signs that depend on the order of writing of the remaining generators of $\Mnd(n)$.
The derivatives $\frac{\di}{\di \phi(i)}$ are understood as left Grassman derivatives for odd $\phi(i)$.

\subsection{Construction of operadic morphism $\RMOR$ $:$ $\UROP_{\Adm,\TVER}$ $\to$ $\OEND_{\ \GRF^{\hspace{-1pt}\TVER}}$}

The operadic morphism $\RMOR$ $:$ $\UROP_{\Adm,\TVER}$ $\to$ $\OEND_{\ \GRF^{\hspace{-1pt}\TVER}}$  consists of a sequence of linear maps
\beq\label{RMORn}
\RMOR_n \, : \, \UROP_{\Adm,\TVER} (n) \to \OEND_{\ \GRF^{\hspace{-1pt}\TVER}} (n) \, \equiv \, \Hom \bigl(\bigl(\GRF^{\hspace{-1pt}\TVER}\bigr)^{\otimes n}, \GRF^{\hspace{-1pt}\TVER}\bigr)
\,.
\eeq
The ansatz for $\RMOR_n$ is
\beq\label{RMORn1}
\RMOR_n (Q) (\ver_1 \otimes \cdots \otimes \ver_n)
\, = \,
\widehat{Q} \Bigl(\Wck{\Adm}{n}(\ver_1,\dots,\ver_n) \Bigr)
\, \in \, \GRF^{\hspace{-1pt}\TVER}
\,,
\eeq
where $Q \in \UROP_{\Adm,\TVER} (n) \subseteq \UROP (n)$ is generally given by Eq. (\ref{Frm}) and $\widehat{Q}$ is then set to be
\beq\label{Frm2}
\widehat{Q} (\GRPH) \, = \, 
\mathop{\sum}\limits_{L \, \in \, \VCL}
q(\GRPH,L) \, \bigl[\GRPH\bigl/(\VRT(\GRPH) ,L)\bigr] 
\,, \qquad
\widehat{Q} : \GRF^{(\Dgm(n))} \to \GRF^{\Dgm(1)}
\,,
\eeq
i.e., $\widehat{Q} (\GRPH)$ contracts the diagram $\GRPH$ to a sum of single vertices according to the color prescription of $Q : \GRF^{(\Dgm(n))} \to \GRF^{(\VCL)}$.

Let us explain by words the meaning of Eq. (\ref{RMORn}).
The value of $\RMOR_n (Q) (\ver_1 \otimes \cdots \otimes \ver_n)$ is a sum of single vertices obtained by making first a sum over all possible ways of connecting the vertices $\ver_1(1),\dots,\ver_n(n)$ into enumerated diagrams; then we contract each of the terms in the latter sum to a sum of single vertices via $Q$. Shortly speaking, $\RMOR_n (Q) (\ver_1 \otimes \cdots \otimes \ver_n)$ is the $Q$--contraction of all possible connections of $\ver_1,\dots,\ver_n$ into diagrams.

\begin{proposition}\label{pr-x}
{\rm (\cite{LN12})}
Equation $(\ref{Frm2})$ determines an operadic morphism.
\end{proposition}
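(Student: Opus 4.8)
The plan is to check that the sequence $\RMOR=(\RMOR_n)_{n\geqslant 1}$ of $(\ref{RMORn1})$ is compatible with the three pieces of operadic structure: the unit, the $\PERM_n$-actions and the partial compositions $\circ_i$. The unit is immediate: for one vertex no connections are possible, so $\Wck{\Adm}{1}(\ver_1)=\ver_1(1)$, and contracting a single vertex by the identity contraction map gives back $\ver_1$; hence $\RMOR_1$ sends the operad unit of $\UROP(1)$ to $\id_{\GRF^\TVER}$. For $\PERM_n$-equivariance one uses that $\Wck{\Adm}{n}(\ver_1,\dots,\ver_n)$ is a symmetric sum over all admissible connections, so relabelling the vertices by $\tau\in\PERM_n$ permutes the arguments exactly as the action on $\OEND_{\GRF^\TVER}(n)$ prescribes, while $\widehat{Q}$ intertwines the two vertex-relabelling actions by construction. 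The whole content of the proposition thus lies in the single identity
\[
\RMOR_{n-1+j}(Q''\circ_i Q')\,=\,\RMOR_n(Q'')\circ_i\RMOR_j(Q')
\]
for $Q''\in\UROP_{\Adm,\TVER}(n)$ and $Q'\in\UROP_{\Adm,\TVER}(j)$, which I would establish by evaluating both sides on $\ver_1\otimes\cdots\otimes\ver_{n-1+j}$.

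The key tool is the explicit Wick formula $(\ref{WIK})$. Since the propagator generators are independent of the field generators, the operators $\exp\bigl(\sum_{(\phi,\psi)\in\Adm}\CVR{\phi}{a}{\psi}{b}\,\di^2/\di\phi(a)\di\psi(b)\bigr)$ attached to distinct pairs $(a,b)$ commute, so the product in $(\ref{WIK})$ factorizes according to whether a pair lies inside the block $\JSE=\JSE(i,j)$ of $(\ref{JVRT})$ or touches its complement:
\[
\prod_{a<b}\exp(\cdots)\,=\,\Bigl(\prod_{\{a,b\}\subseteq\JSE}\exp(\cdots)\Bigr)\Bigl(\prod_{\{a,b\}\not\subseteq\JSE}\exp(\cdots)\Bigr).
\]
Applied to $\ver_1(1)\cdots\ver_{n-1+j}(n-1+j)$, the first factor is exactly $\Wck{\Adm}{j}$ on the block vertices, generating all admissible block subgraphs $\GRPH_\JSE$, while the second factor records the connections incident to the outside. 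This factorization is the generating-function counterpart of the defining formula $(\ref{circ-i})$ for $\circ_i$ in $\UROP$.

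I would then match coefficients. Running the block factor through $\widehat{Q'}$ contracts each admissible $\GRPH_\JSE$ to a single vertex $v_\JSE$ of colour $L$ with weight $q'(\GRPH_\JSE,L)$, which is by definition the value of $\RMOR_j(Q')$ on the block arguments; the leftover outer flags of $\GRPH_\JSE$ become the corolla of $v_\JSE$, and since $Q'$ does not touch these fields the second factor connects $v_\JSE$ together with the outside vertices exactly as $\Wck{\Adm}{n}$ does with $v_\JSE$ placed in the slot prescribed by the enumeration of $\GRPH/\JSE$. Contracting the result by $\widehat{Q''}$ with weight $q''(\GRPH/(\JSE,L),K)$ reproduces $\RMOR_n(Q'')$ on this configuration. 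The whole bookkeeping is governed by the bijection $\GRPH\leftrightarrow(\GRPH_\JSE,\GRPH/(\JSE,L))$ sending an admissible diagram on $n-1+j$ vertices to its block restriction together with its contraction; under it the Wick weight factorizes and the composite coefficient is precisely $\sum_L q'(\GRPH_\JSE,L)\,q''(\GRPH/(\JSE,L),K)$ of $(\ref{circ-i1})$. Summing over $K$ and over all diagrams yields the asserted equality.

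The step I expect to be the main obstacle is the sign bookkeeping in the presence of fermions. There $\Mnd(n)$ is only graded-commutative and the $\di/\di\phi(i)$ are left Grassmann derivatives, so splitting the exponential product and, above all, extracting the block subgraph and reassembling its outer flags into the corolla of $v_\JSE$ reorders odd generators. I would have to verify that the Koszul signs produced by this reordering on the left-hand side agree with those arising from evaluating $\Wck{\Adm}{n}$ on the already-contracted vertex on the right-hand side; equivalently, that the identification $\Dgm(n)\cong\MND(n)$ of Proposition \ref{Pro4.3} is grading-compatible in the way the two factorizations require. Once this parity check is settled, the purely combinatorial bosonic identity carries over verbatim and the operadic-morphism property follows.
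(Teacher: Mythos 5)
The paper itself gives no argument for Proposition \ref{pr-x}: it is cited to \cite{LN12} (``in preparation''), so there is no in-paper proof to measure you against. Judged on its own, your strategy is the natural and, in the bosonic case, a correct one. The reduction to the single identity $\RMOR_{n-1+j}(Q''\circ_i Q')=\RMOR_n(Q'')\circ_i\RMOR_j(Q')$, the factorization of the exponential in (\ref{WIK}) over pairs inside versus not inside the block $\JSE$, and the matching of coefficients via the bijection between admissible connection patterns $\GRPH$ on $n-1+j$ vertices and pairs $(\GRPH_{\JSE},\GRPH/(\JSE,L))$ is exactly the mechanism that turns the Wick sum into the defining formula (\ref{circ-i1}) for $\circ_i$ in $\UROP$. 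One point worth making explicit: the bijection must be run at the level of individual ways of connecting flags (choices of the involution $\MPS$), not merely of isomorphism classes, so that the symmetry factors produced by the exponential in (\ref{WIK}) also factorize; and you should note where the restriction to $\UROP_{\Adm,\TVER}$ is actually used --- the $\TVER$-condition is what guarantees that $\widehat{Q}$ lands in $\GRF^{(\TVER)}$ so that the composite in $\OEND_{\GRF^{\TVER}}$ is defined, while the 1PI and admissibility selections are compatible with the bijection precisely because they satisfy (\ref{gnprin}).

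The one genuinely unfinished step is the one you flag yourself: the Koszul signs in the graded-commutative case. Since the proposition as stated covers theories with fermions, ``I would have to verify'' is not yet a proof --- you need to check that the sign attached to extracting the block generators and reassembling the unused flags as the corolla of $v_{\JSE}$ on one side equals the sign produced by evaluating $\Wck{\Adm}{n}$ on the already-contracted vertex on the other (in practice this reduces to the usual observation that propagators between admissibly coupled fields are even, so the reorderings contribute matching signs). Apart from that loose end, the argument is sound.
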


\section{Outlook}\label{Se-Y}

We make here a connection with the Connes--Kreimer Hopf algebra of ``formal diffeographisms'' (\cite{CK}), which in details will appear in a forthcoming work.

The first step towards the comparison with the Connes--Kreimer approach is to study the dual (commutative) Hopf algebra to the Lie algebra associated with a symmetric operad.
In fact, it can be associated directly to a symmetric \emph{co-operad}.
When this construction is applied to the contraction operads on diagrams we obtain a Hopf algebra that is very close to the Connes--Kreimer Hopf algebra.
However, there is an important difference. On a technical level, in our approach a subdiagram is always contracted to a vertex, while in the Connes--Kreimer theory some subdiagrams that have two external lines can be contracted also to an \text{edge} with no intermediate vertex.

The origin for this difference comes from physics.
The Connes--Kreimer Hopf algebra incorporates an additional step in the renormalization called a ``field renormalization''.
Let us briefly explain this. 
Our set of vertices $\TVER$ corresponds to all the monomials in the Lagrangian of a given QFT model. Some of these vertices of valence two correspond to quadratic terms in the Lagrangian, which are called ``kinetic terms'' since they basically determine the propagators.
For this reason in physics there are no physical parameters related to these terms: we always normalize them with some standard normalization coefficients like
$$
\frac{1}{2} (\di \phi) \cdot (\di \phi)
\,,\qquad
\overline{\psi}(\gamma \cdot \di)\psi
\,,
$$
for a scalar and a spinor field, respectively ($\gamma \cdot \di$ being the Dirac operator).
On the other hand, as a result of the renormalization the coefficients in front of these kinetic terms are changed (renormalized).
Then we absorb this change by a redefinition of the field strengths.
For instance, in the above examples we pass to new fields $\phi' = Z_{\phi} \phi$, $\psi' = Z_{\psi} \psi$ and $\overline{\psi}' = \overline{Z}_{\psi} \overline{\psi}$ so that the kinetic terms are changed by $Z_{\phi}^2$ and $\overline{Z}_{\psi}Z_{\psi}$, respectively, in such a way that compensate the renormalization change.

\begin{acknowledgement}
We thank Dorothea Bahns, Kurusch Ebrahimi-Fard, Alessandra Frabetti, Klaus Fredenhagen and Raymond Stora for fruitful discussions.
The work was partially supported by the French-Bulgarian Project Rila under the contract Egide-Rila N112.
N.N. thanks the Courant Research Center "Higher order structures in mathematics" (G\"ottingen) and the II. Institute for Theoretical Physics at the University of Hamburg for support and hospitality. 
\end{acknowledgement}

\setcounter{section}{0}
\renewcommand{\thesection}{\Alph{section}}

\end{document}